\theoremstyle{defin}
\newtheorem{defin}{Definition}
\theoremstyle{theorem}
\newtheorem{theorem}{Theorem}
\theoremstyle{prop}
\newtheorem{prop}{Proposition}
\theoremstyle{lemma}
\newtheorem{lemma}{Lemma}
\theoremstyle{ex}
\theoremstyle{col}
\date{}
\author{Daniel Rogozin}
\affil{Lomonosov Moscow State University}
\title{Quantale semantics of Lambek calculus with subexponential modalities}
\begin{document}

\maketitle

\begin{abstract}
  In this paper, we consider the polymodal version of Lambek calculus with subexponential modalities initially introduced by Kanovich,
  Kuznetsov, Nigam, and Scedrov \cite{KanovichEtc1} and its quantale semantics. In our approach, subexponential modalities have an interpretation in terms of quantic conuclei. We show that this extension of Lambek calculus is complete w.r.t quantales with quantic conuclei. Also, we prove a representation theorem for quantales with quantic conuclei and show that Lambek calculus with subexponentials is relationally complete. Finally, we extend this representation theorem to the category of quantales with quantic conuclei. Some of these results were presented here \cite{Myself}.
\end{abstract}

\section{Introduction}

Categorial grammars were initially introduced by Ajdukiewicz and Bar-Hillel \cite{Ajd} \cite{Bar}.
In the 1950s, Lambek proposed the way of proof-theoretical consideration of such grammars \cite{Lambek}. In this approach, language is logic and parsing is derivation via inference rules. Let us consider a quite simple example. Suppose one needs to parse this quote from the poem by Oscar Wilde called "Impression du Matin" \cite{Wilde}:

\begin{center}
The Thames nocture of blue and gold \\
Changed to Harmony in grey
\end{center}

The first is to assign the corresponding syntactic categories to the words as follows:

\vspace{\baselineskip}

$ $ \:\:\:\: The \:\:\:\: Thames nocturne \:\:\:\:\:\: of \:\:\:\:\:\:\:\:\:\: blue \:\:\:\:\:\:\: and \:\:\:\:\:\:\:\: gold

$(np / n) / np$ \:\:\:\:\: $np$ \:\:\:\:\:\:\:\:\: $n$ \:\:\:\:\:\: $np \backslash (np / ad)$ \: $ad$ \:\: $ad / (ad \backslash ad)$ \: $ad$

\vspace{\baselineskip}

$ $ \: Changed \:\:\:\: to \:\:\:\: Harmony \:\:\:\:\:\:\:\:\:\: in \:\:\:\:\:\:\:\:\:\:\: grey

$ $ \: $np \backslash (s / p)$ \:\: $p / np$ \:\:\:\:\:\: $np$ \:\:\:\:\:\:\:\: $np \backslash (np / ad)$ \:\: $ad$ \:\:\:\:\:\: $\rightarrow s$

\vspace{\baselineskip}

Here the basic reduction rules are:

\begin{itemize}
  \item $A, A \backslash B \to B$
  \item $B / A, A \to B$
\end{itemize}

Here we assign the special syntactic categories to the words of this sentence. $np$ denotes "noun phrase", $n$ -- noun, $ad$ -- adjective, $p$ -- phrase, $s$ -- sentence. This sequent denotes that this Oscar Wilde's quote is a well-formed sentence.
The verb "changed" has type "$np$ under ($s$ over $p$)". In other words, one needs to apply some noun phrase ("The Thames
nocturne of blue and gold") from the left, apply some phrase from the right ("Changed to Harmony in grey") and obtain
sentence after that. The other syntactic categories might be considered similarly.

The general case of such derivations in categorial grammars is axiomatised via Lambek calculus \cite{Lambek}, non-commutative
linear logic:

\begin{defin} Lambek calculus (the system $L$)
  $ $

  \begin{prooftree}
  \AxiomC{$ $}
  \RightLabel{\scriptsize{ax}}
  \UnaryInfC{$A \rightarrow A$}
  \end{prooftree}

  \begin{minipage}{0.5\textwidth}
  \begin{flushleft}
        \begin{prooftree}
      \AxiomC{$\Gamma \rightarrow A$}
      \AxiomC{$\Delta, B, \Theta \rightarrow C$}
      \RightLabel{$\backslash \rightarrow$}
      \BinaryInfC{$\Delta, \Gamma, A \backslash B, \Theta \rightarrow C$}
    \end{prooftree}

    \begin{prooftree}
      \AxiomC{$\Gamma \rightarrow A$}
      \AxiomC{$\Delta, B, \Theta \rightarrow C$}
      \RightLabel{$/ \rightarrow$}
      \BinaryInfC{$\Delta, B / A, \Gamma, \Theta \rightarrow C$}
    \end{prooftree}

    \begin{prooftree}
      \AxiomC{$\Gamma, A, B, \Delta \rightarrow C$}
      \RightLabel{$\bullet \rightarrow$}
      \UnaryInfC{$\Gamma, A \bullet B, \Delta \rightarrow C$}
    \end{prooftree}
  \end{flushleft}
  \end{minipage}\hfill
  \begin{minipage}{0.5\textwidth}
  \begin{flushright}
    \begin{prooftree}
      \AxiomC{$A, \Pi \rightarrow B$}
      \RightLabel{$\rightarrow \backslash, \Pi \text{ is non-empty}$}
      \UnaryInfC{$\Pi \rightarrow A \backslash B$}
    \end{prooftree}

    \begin{prooftree}
      \AxiomC{$\Pi, A \rightarrow B$}
      \RightLabel{$\rightarrow /, \Pi \text{ is non-empty}$}
      \UnaryInfC{$\Pi \rightarrow B / A$}
    \end{prooftree}

    \begin{prooftree}
      \AxiomC{$\Gamma \rightarrow A$}
      \AxiomC{$\Delta \rightarrow B$}
      \RightLabel{$\rightarrow \bullet$}
      \BinaryInfC{$\Gamma, \Delta \rightarrow A \bullet B$}
    \end{prooftree}
  \end{flushright}
  \end{minipage}

  \begin{prooftree}
    \AxiomC{$\Gamma \rightarrow A$}
    \AxiomC{$\Pi, A, \Delta \rightarrow B$}
    \RightLabel{${\bf cut}$}
    \BinaryInfC{$\Gamma, \Pi, \Delta \rightarrow B$}
  \end{prooftree}
\end{defin}

Note that, non-commutativity has quite strong strong linguistical connotation. We cannot assume that a sentence remains well-formed regardless of the word order.

Here, non-commutativity yields two implications: the left one (or, left division) and the right one (or, right division). Moreover, the right introduction rules for these divisions have the special restriction that claims non-emptiness
of the sequence $\Pi$ in the premise. This restriction denotes that we have no empty word that makes no sense
linguistically. On the other hand, we can avoid this restriction. This calculus is called $L^{*}$ \cite{LambekStar}. From a linguistical point of view, lack of this restriction denotes that the language has an empty word, i.e., a neutral element w.r.t product.

Note that, the empty word might be introduced explicilty via the constant ${\bf 1}$. The logic $L_{\bf 1}$ is a conservative extension of $L^{*}$ \cite{Unit} with the following inference rules for this constant:

\vspace{\baselineskip}

\begin{minipage}{0.5\textwidth}
\begin{flushleft}

  \begin{prooftree}
    \AxiomC{$\Gamma, \Delta \rightarrow A$}
    \RightLabel{${\bf 1} \rightarrow$}
    \UnaryInfC{$\Gamma, {\bf 1}, \Delta \rightarrow A$}
  \end{prooftree}

  \end{flushleft}
\end{minipage}\hfill
\begin{minipage}{0.5\textwidth}
\begin{flushright}

  \begin{prooftree}
    \AxiomC{$ $}
    \RightLabel{$\rightarrow {\bf 1}$}
    \UnaryInfC{$\rightarrow {\bf 1}$}
  \end{prooftree}

\end{flushright}
\end{minipage}

\vspace{\baselineskip}

Let us discuss models of basic Lambek calculus and its completeness theorems. The calculus with non-emptiness restriction is a
logic of residual semigroups. The logic of residual monoid is $L^{*}$. Moreover, these calculi are complete with respect
to $L$-models \cite{Bus} \cite{Pentus}, residual semigroups and monoids on subsets of free semigroup and free monoid correspondingly.

\vspace{\baselineskip}

One may extend Lambek calculus with so-called additive conjunction and disjunction. The inference rules for these connections
are quite similar to inference rules for conjunction and disjunction in intuitionistic sequent calculus. In $L$-models,
additive connections are intersection and union of languages:

\vspace{\baselineskip}

\begin{minipage}{0.5\textwidth}
\begin{flushleft}
    \begin{prooftree}
      \AxiomC{$\Gamma, A_i, \Delta \rightarrow B$}
      \RightLabel{$\land \rightarrow, i = 1,2$}
      \UnaryInfC{$\Gamma, A_1 \land A_2, \Delta \rightarrow B$}
    \end{prooftree}

    \begin{prooftree}
      \AxiomC{$\Gamma, A, \Delta \rightarrow C$}
      \AxiomC{$\Gamma, B, \Delta \rightarrow C$}
      \RightLabel{$\vee \rightarrow$}
      \BinaryInfC{$\Gamma, A \vee B, \Delta \rightarrow C$}
    \end{prooftree}
  \end{flushleft}
\end{minipage}\hfill
\begin{minipage}{0.5\textwidth}
\begin{flushright}
  \begin{prooftree}
    \AxiomC{$\Gamma \rightarrow A$}
    \AxiomC{$\Gamma \rightarrow B$}
    \RightLabel{$\rightarrow \land$}
    \BinaryInfC{$\Gamma \rightarrow A \land B$}
  \end{prooftree}

  \begin{prooftree}
    \AxiomC{$\Gamma \rightarrow A_i$}
    \RightLabel{$\rightarrow \vee, i = 1,2$}
    \UnaryInfC{$\Gamma \rightarrow A_1 \vee A_2$}
  \end{prooftree}
\end{flushright}
\end{minipage}

\vspace{\baselineskip}

Note that Lambek calculus with additives is incomplete w.r.t $L$-models. Here, there are four distributivity inclusions. They
are all true in each $L$-model, just because these properties hold for language intersection and union as for usual Boolean
operations on sets. On the other hand, only two of them are provable:

  \begin{enumerate}
    \item $\vdash A \lor (B \land C) \rightarrow (A \lor B) \land (A \lor C)$
    \item $\not\vdash (A \lor B) \land (A \lor C) \rightarrow A \lor (B \land C)$
    \item $\not\vdash A \land (B \lor C) \rightarrow (A \land B) \lor (A \land C)$
    \item $\vdash (A \land B) \lor (A \land C) \rightarrow A \land (B \lor C)$
  \end{enumerate}

One may read about the other approaches to the interpretation of Lambek calculus with additives (sometimes it's called \emph{full Lambek calculus}) here \cite{Wurm} and here \cite{Bus1}.

In this paper, we introduce quantales in order to explain the purely algebraic semantics of Lambek calculus with additives and its modal extensions. A quantale is a generalisation of locales and some well-known structures from functional analysis,
such as $C^{*}$ algebras. The initial idea to consider quantales within linear logic belongs to David Yetter \cite{Yetter}.

Brown and Gurr proved that Lambek calculus with additives is sound and strongly complete w.r.t quantales \cite{BrownAndGurr2}.
Moreover, there is a representation theorem which claims that any quantale is isomorphic to relational quantale on its underlying set \cite{BrownAndGurr1}. We discuss what relational quantale is later.

\section{Subexponential modalities}

One may extend full Lambek calculus via so-called (sub)exponential modalities and this extension might be motivated linguistically \cite{Morrill} \cite{KanovichEtc}. Let us consider the following phrase in order to explain subexponential modalities use.

\begin{center}
  The young lady whom Childe Harold met before his pilgrimage
\end{center}

There is "the young lady" in the middle of this phrase. We are incapable of processing such cases in the core Lambek
calculus. That is, there is no ability to extract from the middle. "Met" is a phrasal verb. Whom did Childe Halord meet? He
met the young lady.

For this purpose, we introduce the exponential modality with exchange rule. The left one rule is a
dereliction, which is similar to the left $\Box$ introduction in modal logic $T$.

\vspace{\baselineskip}

\begin{minipage}{0.5\textwidth}
\begin{flushleft}
\begin{prooftree}
  \AxiomC{$\Gamma, A, \Delta \rightarrow B$}
  \RightLabel{$(! \to)$}
  \UnaryInfC{$\Gamma, ! A, \Delta \rightarrow B$}
\end{prooftree}
\end{flushleft}
\end{minipage}\hfill
\begin{minipage}{0.5\textwidth}
\begin{flushright}
\begin{prooftree}
  \AxiomC{$\Gamma, ! A, \Delta, \Theta \rightarrow C$}
  \RightLabel{${\bf ex}$}
  \UnaryInfC{$\Gamma, \Delta, ! A, \Theta \rightarrow C$}
\end{prooftree}
\end{flushright}
\end{minipage}

\begin{prooftree}
  \AxiomC{$\Gamma, \Delta, ! A, \Theta \rightarrow C$}
  \RightLabel{${\bf ex}$}
  \UnaryInfC{$\Gamma, ! A, \Delta, \Theta \rightarrow C$}
\end{prooftree}

\vspace{\baselineskip}

The another one phrase is related to so-called parasitic extraction:

\begin{center}
The letter that Young Werther sent to Charlotte without reading
\end{center}

In addition to medial extraction, we used "the letter" twice in this phrase. Thus, one needs to multiply our linguistical resources in a restricted way.

\begin{center}
The letter that$_i$ Young Werther sent $e_i$ to Charlotte without reading $e_i$
\end{center}

Subexponential modality with non-local contraction allows one to do such operations:

\vspace{\baselineskip}

\begin{minipage}{0.5\textwidth}
\begin{flushleft}
\begin{prooftree}
  \AxiomC{$\Gamma, ! A, \Delta, ! A, \Theta \rightarrow B$}
  \RightLabel{${\bf contr}$}
  \UnaryInfC{$\Gamma, \Delta, ! A, \Theta \rightarrow B$}
\end{prooftree}
\end{flushleft}
\end{minipage}\hfill
\begin{minipage}{0.5\textwidth}
\begin{flushright}
\begin{prooftree}
  \AxiomC{$\Gamma, ! A, \Delta, ! A, \Theta \rightarrow B$}
  \RightLabel{${\bf contr}$}
  \UnaryInfC{$\Gamma, ! A, \Delta, \Theta \rightarrow B$}
\end{prooftree}
\end{flushright}
\end{minipage}

\vspace{\baselineskip}

Note that, the usual form of contraction yields the cut inadmissibility in contrast to the non-local contraction that generalises the contraction rule \cite{KanovichEtc2}. The similar version of contraction was also considered by de Paiva and Eades III \cite{Paiva}.

\vspace{\baselineskip}

It is useful to have many modalities and distinguish them in accordance with their abilities. That is, we are going to
consider the polymodal case. Let us introduce a subexponential signature, which is a preorder with upwardly closed subsets,
where $\mathcal{W}$ denotes weakening, and so on. Note that, a subexponential signature might have any cardinality, finite or infinite. The polymodal version of Lambek calculus with subexponential was introduced by Kanovich, Kuznetsov, Nigam, and Scedrov \cite{KanovichEtc1}.
The commutative case of subexponential modalities was considered initially by Nigam and Miller \cite{NigamMiller}.

\begin{defin} A subexponential signature is an ordered quintuple:

  $\Sigma = \langle \mathcal{I}, \preceq, \mathcal{W}, \mathcal{C}, \mathcal{E} \rangle$,
where $\langle \mathcal{I}, \preceq \rangle$ is a preorder.
$\mathcal{W}, \mathcal{C}, \mathcal{E}$ are upwardly closed subsets of $I$ and $\mathcal{W} \cap \mathcal{C} \subseteq \mathcal{E}$.
\end{defin}

The last condition claims that if there are weakening and contraction, then one may also exchange as follows.

\begin{prooftree}
  \AxiomC{$\Gamma, !_s A, \Delta, \Theta \rightarrow B$}
  \RightLabel{${\bf weak}$}
  \UnaryInfC{$\Gamma, !_s A, \Delta, !_s A, \Theta \rightarrow B$}
  \RightLabel{${\bf ncontr}$}
  \UnaryInfC{$\Gamma, \Delta, !_s A, \Theta \rightarrow B$}
\end{prooftree}
Here, $s \in \mathcal{W} \cap \mathcal{C}$.

Let us introduce the polymodal inference rules for non-commutative subexponentials. One can apply substructural
rules only if there is the relevant index on the current modality. Also, the right introduction rule is a sort of the
generalised the right $\Box$-introduction rule \'a la the modal logic ${\bf K}4$. Modality $!_s$ might be introduced on the right only if its index is less than any other
subexponential index from the antecedent. That is, if we have already used stronger modality, then we may apply the weaker one.

\begin{defin} Let $\Sigma$ be a subexponential signature. Noncommutative linear logic with subexponentials $SMALC_{\Sigma}$ is Lambek calculus ${\bf L}_{{\bf 1}}$ with additive connections and the following polymodal inference rule.

    \vspace{\baselineskip}

\begin{minipage}{0.5\textwidth}
  \begin{flushleft}
    \begin{prooftree}
    \AxiomC{$\Gamma, A, \Delta \rightarrow C$}
    \RightLabel{$! \rightarrow$}
    \UnaryInfC{$\Gamma, !^{s} A, \Delta \rightarrow C$}
    \end{prooftree}

    \begin{prooftree}
    \AxiomC{$\Gamma, !^{s} A, \Delta, !^{s} A, \Theta \rightarrow B$}
    \RightLabel{${\bf ncontr}_1, s \in C$}
    \UnaryInfC{$\Gamma, !^{s} A, \Delta, \Theta \rightarrow B$}
    \end{prooftree}

    \begin{prooftree}
    \AxiomC{$\Gamma, \Delta, !^{s} A, \Theta \rightarrow B$}
    \RightLabel{${\bf ex}_1, s \in E$}
    \UnaryInfC{$\Gamma, !^{s} A, \Delta, \Theta \rightarrow A$}
    \end{prooftree}

  \end{flushleft}
\end{minipage}
\begin{minipage}{0.5\textwidth}
  \begin{flushright}
    \begin{prooftree}
    \AxiomC{$!^{s_1} A_1, \dots, !^{s_n} A_n \rightarrow A$}
    \RightLabel{$\rightarrow !, \forall j, s_j \succeq s$}
    \UnaryInfC{$!^{s_1} A_1, \dots, !^{s_n} A_n \rightarrow !^{s} A$}
    \end{prooftree}

    \begin{prooftree}
      \AxiomC{$\Gamma, !^{s} A, \Delta, !^{s} A, \Theta \rightarrow B$}
      \RightLabel{${\bf ncontr}_2, s \in C$}
      \UnaryInfC{$\Gamma, \Delta, !^{s} A, \Theta \rightarrow B$}
    \end{prooftree}

    \begin{prooftree}
      \AxiomC{$\Gamma, !^{s} A, \Delta, \Theta \rightarrow B$}
      \RightLabel{${\bf ex}_2, s \in E$}
      \UnaryInfC{$\Gamma, \Delta, !^{s} A, \Theta \rightarrow A$}
    \end{prooftree}
  \end{flushright}
\end{minipage}

\begin{prooftree}
\AxiomC{$\Gamma, \Delta \rightarrow B$}
\RightLabel{${\bf weak}_!, s \in C$}
\UnaryInfC{$\Gamma, !^{s} A, \Delta \rightarrow B$}
\end{prooftree}

\end{defin}

Let us consider the current proof-theoretical and algorithmic results on Lambek calculus with additives and subexponentials.
First of all, the cut rule is admissible. Generally, this calculus is undecidable, but the fragment without non-local
contraction belongs to PSPACE. These results were obtained by Kanovich, Kuznetsov, Nigam, and Scedrov \cite{KanovichEtc1}:

\begin{theorem}
$ $

  \begin{enumerate}
    \item Cut-rule is admissable
    \item $\text{SMALC}_{\Sigma}$ is undecidable, if $C \neq \emptyset$
    \item If $C$ is empty, then the decidability problem of $\text{SMALC}_{\Sigma}$ belongs to PSPACE.
  \end{enumerate}
\end{theorem}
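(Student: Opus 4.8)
Since Parts~2 and~3 both rely on cut-free proof search, I would establish this first. The plan is a standard Gentzen-style elimination argument, organised as a double induction: the outer parameter is the complexity of the cut formula, and the inner parameter is the sum of the heights of the derivations of the two premises of $\mathbf{cut}$. The propositional cases (the two divisions, the product, the additives, the unit $\mathbf{1}$) are the usual key and permutation reductions already present for full Lambek calculus. The new cases are the subexponential rules: $!\to$, $\mathbf{weak}_!$, and the exchange rules $\mathbf{ex}_{1},\mathbf{ex}_{2}$ commute past the cut without trouble because they do not replicate the cut formula, and when a principal $!^{s}A$ introduced by $\to!$ meets $!\to$ on the other side the cut is replaced by a cut on $A$ of smaller complexity (with the remaining antecedent formulas automatically of index $\succeq s$, so promotion stays applicable). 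The real obstacle is $\mathbf{ncontr}_{1},\mathbf{ncontr}_{2}$: when the cut formula is afterwards duplicated by non-local contraction, the naive reduction decreases neither induction parameter. I would resolve this exactly in the style of \cite{KanovichEtc1}, by first proving admissibility of a \emph{generalised} cut (a ``mix'') that removes several copies of the cut formula at once; the non-locality of the contraction rule, as opposed to ordinary contraction, is precisely what makes the mix reduction terminate, and ordinary $\mathbf{cut}$ is then the one-copy instance.

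\textbf{Undecidability when $\mathcal{C}\neq\emptyset$.} The plan is a reduction from an undecidable problem about computations that fits the non-commutative setting naturally, e.g.\ the word problem for semi-Thue systems or the halting problem for (and-branching) two-counter machines. Fixing an index $s\in\mathcal{C}$, one encodes a configuration as a formula over atoms for states/letters/counters, and each rewrite rule or instruction as an implicational formula guarded by $!^{s}$; non-local contraction on $!^{s}$ lets one encoded rule be reused arbitrarily often along a derivation, which simulates an unbounded computation, while cut-freeness (Part~1) plus the subformula property constrain proofs enough to read a terminating computation off a derivation. The main obstacle is faithfulness of the encoding: soundness (the machine halts $\Rightarrow$ the coding sequent is derivable) is a routine induction on the computation, but completeness (derivable $\Rightarrow$ halts) requires a normal-form analysis of cut-free proofs to exclude spurious derivations arising from applying $!\to$, weakening, the additive rules, or the promotion rule $\to!$ in positions that do not correspond to legitimate computation steps.

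\textbf{PSPACE when $\mathcal{C}=\emptyset$.} With no contraction available, reading a cut-free proof of $\Gamma\to C$ from the root upwards, every inference either leaves the multiset of formulas in the active sequent unchanged or strictly shrinks it (weakening and dereliction delete a formula, the binary rules split the context, the additive and modal rules copy nothing), so every sequent in the proof consists of subformulas of $\Gamma, C$ and has total size bounded by that of the endsequent. Hence proof search can be run as an alternating procedure that at each sequent existentially guesses the last rule together with its principal formula and then universally branches to the one or two premises; since each sequent occupies polynomial space and every branch has polynomially bounded length (each nontrivial step deletes a connective or a formula, and one must check that the context-splitting and modal steps cannot loop), the search uses alternating polynomial time, i.e.\ $\mathrm{PSPACE}$. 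The delicate point is the interaction of promotion $\to!$ with the additive rules — promotion demands that the whole antecedent already consist of $!$-formulas of index $\succeq s$ — so the bound on branch length must account for alternations between additive splitting and modal steps; once this is verified the $\mathrm{PSPACE}$ bound follows, in line with the known $\mathrm{PSPACE}$-completeness of the additive–multiplicative fragment.
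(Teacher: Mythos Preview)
Your sketch is a reasonable outline of how the results in \cite{KanovichEtc1} are obtained, but the paper itself gives no proof of this theorem: it is stated purely as a quotation of known results, introduced with ``These results were obtained by Kanovich, Kuznetsov, Nigam, and Scedrov \cite{KanovichEtc1}'' and left unproved. So there is nothing to compare your argument against; the intended ``proof'' here is simply a citation.

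That said, a couple of remarks on your outline as a self-contained argument. For Part~1 the mix/deep-cut strategy is indeed what is needed; the point you should make explicit is that with \emph{non-local} contraction the two displayed occurrences of $!^{s}A$ need not be adjacent, so the generalised cut must be formulated so as to erase arbitrarily positioned copies simultaneously (this is exactly the distinction \cite{KanovichEtc2} draws between local and non-local contraction). For Part~3 your ``every branch has polynomially bounded length'' claim needs a bit more care: promotion $\to!$ does not shrink the sequent, and neither does exchange, so the termination measure has to count connectives in the succedent together with the antecedent multiset, and you must argue that no rule can be inverted infinitely often without decreasing that measure. Once those points are tightened your plan matches the argument in \cite{KanovichEtc1}.
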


\section{Quantale background}

Now we introduce a reader to quantales quite briefly. One may take a look at these books \cite{Quantale18}
\cite{Rosenthal} in order to be familiar with quantales and related concepts closely.

\begin{defin} Quantale
$ $

  A quantale is a triple $\mathcal{Q} = \langle A, \bigvee, \cdot \rangle$, where $\langle A, \bigvee \rangle$
is a complete join semilattice and $\langle A, \cdot \rangle$ is a semigroup such that for each indexing set $J$:

\begin{enumerate}
  \item $a \cdot \bigvee \limits_{j \in J} b_i = \bigvee \limits_{j \in J} (a \cdot b_j)$;
  \item $\bigvee \limits_{j \in J} a_j \cdot b = \bigvee \limits_{j \in J} (a_j \cdot b)$
\end{enumerate}

A quanlate is called unital, if $\langle A, \cdot \rangle$ is a monoid.
\end{defin}

Note that any quantale is a complete lattice, so far as any join semilattice is a complete lattice \cite{Johnstone}.

There are several examples of quantales:

\begin{itemize}
\item Let $S$ be a semigroup (monoid), then $\langle \mathcal{P}(S), \cdot, \subseteq \rangle$
is a free (unital) quantale.
\item Let $\mathcal{R}$ be a ring and $Sub(\mathcal{R})$ be a set of additive subgroups of $R$.
We define $A \cdot B$ as an additive subgroup generated by finite sums of products $ab$ and order is defined by inclusion.
\item Any locale is a quantale with $\cdot = \wedge$.
\end{itemize}

It is easy to see, that any (unital) quantale is a residual (monoid) semigroup. We define divisions as follows:

\begin{enumerate}
\item $a \backslash b = \bigvee \{ c \: | \: a \cdot c \leq b \}$
\item $b / a = \bigvee \{ c \: | \: c \cdot a \leq b \}$
\end{enumerate}

Residuality for these divisions holds straightforwardly:

\begin{center}
  $b \leq a \backslash c \Leftrightarrow a \cdot b \leq c \Leftrightarrow a \leq b / c$
\end{center}

A quantale homomorphism, subquantales, centre are defined quite naturally:

\begin{defin}

  Let $\mathcal{Q}_1$, $\mathcal{Q}_2$ be quantales. A quantale homomorphism is a map $f : \mathcal{Q}_1 \to \mathcal{Q}_2$, such that:

  \begin{enumerate}
    \item for all $a,b \in \mathcal{Q}_1$, $f(a \cdot b) = f(a) \cdot f(b)$;
    \item for all indexing set $I$, $f(\bigvee \limits_{i \in I} a_i) = \bigvee \limits_{i \in I} f(a_i)$.
  \end{enumerate}

  If $\mathcal{Q}_1$, $\mathcal{Q}_2$ are unital quantales, then a unital homomorphism is a quantale homomorphism such that $f(\varepsilon) = \varepsilon$.
\end{defin}

\begin{defin}
$ $

  Let $\mathcal{Q} = \langle A, \bigvee, \cdot \rangle$ be a quantale. $\mathcal{S} \subseteq \mathcal{Q}$ is said to be a subquantale, if $\mathcal{S}$ is closed under multiplication and joins.
\end{defin}

\begin{defin}
$ $

  Let $\mathcal{Q} = \langle A, \bigvee, \cdot \rangle$ be a quantale.
  The centre of a quantale is the subquantale $\mathcal{Z}(\mathcal{Q}) = \{ a \in A \: | \: \forall b \in A, a \cdot b = b \cdot a \}$
\end{defin}

There occurs the following simple statement:

\begin{prop} \label{prop1}
  $ $

Let $\mathcal{Q}_1$, $\mathcal{Q}_2$ be quantales and $\mathcal{S} \subseteq \mathcal{Q}_1$ is a subquantale of
$\mathcal{Q}_1$.

Then, if $f : \mathcal{Q}_1 \to \mathcal{Q}_2$ is a quantale homomorphism, then $f(\mathcal{S}) \subseteq \mathcal{Q}_2$ is a subquantale of $\mathcal{Q}_2$.

That is, a homomorphic image of subquatale is a subquantale.
\end{prop}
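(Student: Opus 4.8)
The plan is to show directly that $f(\mathcal{S})$ is closed under the quantale operations of $\mathcal{Q}_2$, namely multiplication and arbitrary joins, using only the defining properties of a quantale homomorphism together with the fact that $\mathcal{S}$ is itself closed under these operations inside $\mathcal{Q}_1$. There is no need to invoke any deep structural result; the statement is essentially a transport-of-closure argument, and the whole proof amounts to chasing elements through the homomorphism conditions.

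First I would take two elements of $f(\mathcal{S})$, say $x = f(a)$ and $y = f(b)$ with $a, b \in \mathcal{S}$. Since $\mathcal{S}$ is a subquantale, $a \cdot b \in \mathcal{S}$, and by the multiplicativity of $f$ we get $x \cdot y = f(a) \cdot f(b) = f(a \cdot b) \in f(\mathcal{S})$. This handles closure under multiplication. Next I would take an arbitrary family $\{x_i\}_{i \in I} \subseteq f(\mathcal{S})$ and, for each $i$, pick $a_i \in \mathcal{S}$ with $x_i = f(a_i)$. Since $\mathcal{S}$ is closed under joins, $\bigvee_{i \in I} a_i \in \mathcal{S}$, and by the join-preservation clause of the homomorphism definition, $\bigvee_{i \in I} x_i = \bigvee_{i \in I} f(a_i) = f\left(\bigvee_{i \in I} a_i\right) \in f(\mathcal{S})$. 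This handles closure under joins, so $f(\mathcal{S})$ satisfies the definition of a subquantale of $\mathcal{Q}_2$.

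One small point I would be careful about is that the choice of a preimage $a_i$ for each $x_i$ is non-canonical, so in full rigour the argument uses a choice of representatives; but since the value $\bigvee_i f(a_i)$ depends only on the $x_i$ and not on the chosen representatives, and the same is true for the product, nothing in the conclusion is affected. I would also note that the ``indexing set'' in the join-preservation axiom is allowed to be arbitrary (in particular empty or a singleton), so no separate verification of, say, finite joins or bottom elements is required. Honestly, there is no real obstacle here — the only thing to watch is to state the two closure conditions precisely as they appear in the definition of a subquantale and to cite the correct clauses of the homomorphism definition; the ``proof'' is a two-line element chase in each case.
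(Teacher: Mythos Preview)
Your argument is correct and is exactly the routine closure verification the paper has in mind; the paper itself records the proof as ``Obviously.'' There is nothing to add: your two element-chases for products and arbitrary joins are precisely the content behind that one word.
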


\begin{proof}
$ $

  Obviously.
\end{proof}

Let us define the special kinds of elements.

\begin{defin}
  Let $\mathcal{Q}$ be a quantale and let $a \in \mathcal{Q}$:

  \begin{enumerate}
    \item $a$ is central iff $z \in \mathcal{Z}(\mathcal{Q})$
    \item $a$ is strongly square increasing iff for all $b \in \mathcal{Q}$, $a \cdot b \leq a \cdot b \cdot a$ and $b \cdot a \leq a \cdot b \cdot a$
  \end{enumerate}
\end{defin}

The strongly square increasing property is introduced by us in order to have an algebraic counterpart of non-local contraction.
The usual form of contraction corresponds to the property  that sometimes called \emph{square increasing} $a \leq a \cdot a$.
It is clear that strongly square increasing property yields the usual form of that semi-idempotence.

Let us formulate the basic properties of those elements.

\begin{lemma} \label{ssiLemma}

  Let $\mathcal{Q}$ be a unital quantale and $a \in \mathcal{Q}$ be a strongly square increasing such that $a \leq \varepsilon$. Then $a \in \mathcal{Z}(\mathcal{Q})$, that is, $a$ is central.
\end{lemma}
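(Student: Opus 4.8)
The plan is to exploit the interaction between the hypothesis $a \leq \varepsilon$ and the strongly square increasing inequalities, squeezing the term $a \cdot b \cdot a$ between $a \cdot b$ and $b \cdot a$ from both sides. The only background fact I need is that multiplication in a quantale is monotone in each argument: if $x \leq y$, then $x \vee y = y$, so $x \cdot z \vee y \cdot z = (x \vee y) \cdot z = y \cdot z$, whence $x \cdot z \leq y \cdot z$, and symmetrically on the other side. This is immediate from the distributivity of $\cdot$ over arbitrary joins in the definition of a quantale.

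Fix an arbitrary $b \in \mathcal{Q}$. First I would use $a \leq \varepsilon$ together with monotonicity to get the two bounds $a \cdot b \cdot a \leq a \cdot b \cdot \varepsilon = a \cdot b$ (applying $a \leq \varepsilon$ in the rightmost factor) and $a \cdot b \cdot a \leq \varepsilon \cdot b \cdot a = b \cdot a$ (applying $a \leq \varepsilon$ in the leftmost factor). Next I would invoke the strongly square increasing property of $a$, which gives exactly the reverse inequalities $a \cdot b \leq a \cdot b \cdot a$ and $b \cdot a \leq a \cdot b \cdot a$.

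Combining the two displays yields $a \cdot b = a \cdot b \cdot a$ and $b \cdot a = a \cdot b \cdot a$ by antisymmetry of the order, hence $a \cdot b = b \cdot a$. Since $b$ was arbitrary, $a$ commutes with every element and therefore $a \in \mathcal{Z}(\mathcal{Q})$.

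There is essentially no hard step here: the argument is a four-line chain of inequalities. If anything, the only point worth stating explicitly (so the reader does not have to reconstruct it) is the monotonicity of $\cdot$ from the join-distributivity axioms; everything else is antisymmetry plus the two hypotheses applied once each on the left and once each on the right. I would present the proof in full rather than sketch it, since it is short enough to write out completely.
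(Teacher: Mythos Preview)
Your argument is correct and essentially identical to the paper's: both use the two strongly-square-increasing inequalities together with $a \leq \varepsilon$ and monotonicity of the product to sandwich $a \cdot b$ and $b \cdot a$ against $a \cdot b \cdot a$. The paper simply presents the same four inequalities as a single chain $b \cdot a \leq a \cdot b \cdot a \leq a \cdot b \leq a \cdot b \cdot a \leq b \cdot a$, whereas you split it into two antisymmetry steps; the content is the same.
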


\begin{proof}
$ $

  $\begin{array}{lll}
  & b \cdot a \leq a \cdot b \cdot a \leq a \cdot b \cdot \varepsilon \leq a \cdot b \leq a \cdot b \cdot a \leq \varepsilon \cdot b \cdot a \leq b \cdot a&
  \end{array}$
\end{proof}

\begin{lemma} \label{ssiSub}
  Let $\mathcal{Q}$ be a quantale, the set $\mathcal{A} = \{ a \in \mathcal{Q} \: | \: \text{$a$ is strongly square increasing}\}$ is a subquantale of $\mathcal{Q}$
\end{lemma}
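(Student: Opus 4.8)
The plan is to show that $\mathcal{A}$ is closed under arbitrary joins and under the semigroup multiplication, which are exactly the two closure conditions in the definition of a subquantale.

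First I would verify closure under joins. Let $\{a_j\}_{j \in J} \subseteq \mathcal{A}$ and put $a = \bigvee_{j \in J} a_j$. For an arbitrary $b \in \mathcal{Q}$ I must check $a \cdot b \leq a \cdot b \cdot a$ and $b \cdot a \leq a \cdot b \cdot a$. Using the distributivity of $\cdot$ over $\bigvee$ on both sides, $a \cdot b = \bigvee_{j} (a_j \cdot b)$, and for each fixed $j$ we have $a_j \cdot b \leq a_j \cdot b \cdot a_j \leq a \cdot b \cdot a$ since $a_j \leq a$ and multiplication is monotone (monotonicity itself follows from the join-distributivity laws, as $x \leq y$ means $x \vee y = y$). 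Taking the join over $j$ gives $a \cdot b \leq a \cdot b \cdot a$; the other inequality $b \cdot a \leq a \cdot b \cdot a$ is symmetric, expanding $b \cdot a = \bigvee_j (b \cdot a_j)$ and using $b \cdot a_j \leq a_j \cdot b \cdot a_j \leq a \cdot b \cdot a$.

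Next I would verify closure under multiplication. Let $a, a' \in \mathcal{A}$ and $b \in \mathcal{Q}$; I want $(a \cdot a') \cdot b \leq (a \cdot a') \cdot b \cdot (a \cdot a')$ and the left-hand analogue. Starting from $a \cdot (a' \cdot b)$, apply the strongly-square-increasing property of $a'$ (with witness $b$ on the appropriate side) and of $a$ (with witness suitable products), chaining the two inequalities together with associativity and monotonicity so that both an $a$ and an $a'$ get inserted on the right in the correct order. Concretely one expects a chain like $a \cdot a' \cdot b \leq a \cdot a' \cdot b \cdot a' \leq a \cdot a' \cdot b \cdot a' \cdot a$... but here is the subtlety: the final factor must be $a \cdot a'$, not $a' \cdot a$, so one has to be careful about which side each application inserts its copy. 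I would use that $a'$ strongly square increasing gives $x \cdot a' \leq a' \cdot x \cdot a'$ as well as $a' \cdot x \leq a' \cdot x \cdot a'$, and similarly for $a$, picking the variant that produces the factors in the order $a, a'$ reading left to right on the right end; the mirror image handles $b \cdot (a \cdot a')$.

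The main obstacle I anticipate is exactly this bookkeeping in the multiplicative closure step: arranging the successive applications of the strongly-square-increasing inequalities so the inserted copies appear in the correct order and associate correctly to reconstruct $a \cdot a'$ on the right (and on the left for the symmetric inequality), rather than some scrambled product. Everything else — join closure, monotonicity, associativity — is routine given the quantale axioms already recorded in the excerpt.
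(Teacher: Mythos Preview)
Your plan is correct and matches the paper's approach exactly: check closure under arbitrary joins using distributivity and monotonicity, then closure under products by chaining two applications of the strongly-square-increasing inequalities. Your join argument is in fact slightly cleaner than the paper's, which writes a couple of equalities that are really only inequalities; your version, bounding each $a_j \cdot b \cdot a_j$ directly by $a \cdot b \cdot a$ via monotonicity and then taking the join, avoids that.

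On the multiplicative step, the paper resolves precisely the ordering worry you flagged, and since you left it open it is worth recording the chain. From the property of $a_1$ applied to the element $a_2 \cdot b$ one gets $a_1 \cdot a_2 \cdot b \leq a_1 \cdot a_2 \cdot b \cdot a_1$; right-multiplying this by $a_2$ yields $a_1 \cdot a_2 \cdot b \cdot a_2 \leq a_1 \cdot a_2 \cdot b \cdot a_1 \cdot a_2$. Separately, from the property of $a_2$ applied to $b$ and then left-multiplied by $a_1$ one gets $a_1 \cdot a_2 \cdot b \leq a_1 \cdot a_2 \cdot b \cdot a_2$. Chaining the two gives $a_1 \cdot a_2 \cdot b \leq (a_1 \cdot a_2) \cdot b \cdot (a_1 \cdot a_2)$ with the factors in the correct order; the inequality $b \cdot (a_1 \cdot a_2) \leq (a_1 \cdot a_2) \cdot b \cdot (a_1 \cdot a_2)$ is handled by the mirror-image argument. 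So your anticipated obstacle is real but dissolves once you insert the outer factor $a_1$ first and only then post-multiply by $a_2$.
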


\begin{proof}
  One needs to prove $\mathcal{A}$ is closed under product and sups.

  \begin{enumerate}
    \item Let $J$ be a indexing set and for all $j \in J$, $a_i \in A$, and $b \in \mathcal{Q}$, then:

    $\begin{array}{lll}
    &\bigvee \limits_{j \in J} a_j \cdot b =
    \bigvee \limits_{j \in J} (a_j \cdot b) \leq
    \bigvee \limits_{j \in J} (a_j \cdot b \cdot a_j) =
    \bigvee \limits_{j \in J} a_j \cdot \bigvee \limits_{j \in J} (b \cdot a_j) =
    \bigvee \limits_{j \in J} a_j \cdot b \cdot \bigvee \limits_{j \in J} a_j & \\
    \end{array}$

    The second inequation $b \cdot \bigvee \limits_{j \in J} a_j \leq \bigvee \limits_{j \in J} a_j \cdot b \cdot \bigvee \limits_{j \in J} a_j$ might be proved similarly.

    \item Let $a_1, a_2 \in \mathcal{A}$. Let us show that $a_1 \cdot a_2 \in A$.

    First of all, $a_1 \cdot a_2 \cdot b = a_1 \cdot (a_2 \cdot b) \leq a_1 \cdot (a_2 \cdot b) \cdot a_1$.
    By monotonicity, $a_1 \cdot a_2 \cdot b \cdot a_2 \leq a_1 \cdot a_2 \cdot b \cdot a_1 \cdot a_2$.

    On the other hand, $a_1 \cdot a_2 \cdot b \leq a_1 \cdot a_2 \cdot b \cdot a_2$.
    Then, by transitivity, $a_1 \cdot a_2 \cdot b \leq a_1 \cdot a_2 \cdot b \cdot a_1 \cdot a_2$.

    The second inequation $b \cdot (a_1 \cdot a_2) \leq (a_1 \cdot a_2) \cdot b \cdot (a_1 \cdot a_2)$ might be proved similarly.
  \end{enumerate}
\end{proof}

\begin{lemma} \label{Unital}

  Let $\mathcal{Q}$ be a unital quantale, then the set $\mathcal{A} = \{ a \in \mathcal{Q} \: | \: a \leq \varepsilon \}$
\end{lemma}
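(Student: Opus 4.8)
The statement of Lemma~\ref{Unital} is plainly incomplete as printed: it introduces the set $\mathcal{A} = \{a \in \mathcal{Q} \mid a \leq \varepsilon\}$ but never says what property $\mathcal{A}$ has. By analogy with Lemma~\ref{ssiSub}, the intended claim is surely that $\mathcal{A}$ is a subquantale of $\mathcal{Q}$, so the plan is to prove exactly that: $\mathcal{A}$ is closed under arbitrary joins and under the product.

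For closure under joins, I would take an index set $J$ and elements $a_j \leq \varepsilon$ for each $j \in J$. Since $\varepsilon$ is an upper bound of the family $\{a_j\}_{j \in J}$ and $\bigvee_{j \in J} a_j$ is the least upper bound, we get $\bigvee_{j \in J} a_j \leq \varepsilon$ immediately, so $\bigvee_{j \in J} a_j \in \mathcal{A}$. This handles the empty-join case too: the bottom element $\bigvee \emptyset$ is below everything, hence below $\varepsilon$. For closure under product, I would take $a_1, a_2 \in \mathcal{A}$, so $a_1 \leq \varepsilon$ and $a_2 \leq \varepsilon$. Then by monotonicity of multiplication in each argument (which follows from the join-distributivity axioms of a quantale), $a_1 \cdot a_2 \leq \varepsilon \cdot a_2 = a_2 \leq \varepsilon$, using unitality of $\mathcal{Q}$ for the equality $\varepsilon \cdot a_2 = a_2$. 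Hence $a_1 \cdot a_2 \in \mathcal{A}$.

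There is no real obstacle here; the only subtlety is the bookkeeping point that monotonicity of $\cdot$ is not postulated directly but is a consequence of the distributivity laws (if $x \leq y$ then $x \vee y = y$, so $z \cdot y = z \cdot (x \vee y) = (z \cdot x) \vee (z \cdot y) \geq z \cdot x$, and symmetrically on the other side). I would state this monotonicity fact once, perhaps as a remark, and then the two closure verifications are each a single line. It is also worth noting that unitality is genuinely used: without a unit, $\{a \mid a \leq \varepsilon\}$ is not even well-posed, and the product-closure argument is precisely where $\varepsilon$ acts as an identity. One could additionally observe that $\mathcal{A}$ is itself a unital quantale only if $\varepsilon \in \mathcal{A}$, which holds trivially since $\varepsilon \leq \varepsilon$; so in fact $\mathcal{A}$ is a subquantale containing the unit, though as a subquantale it need not be required to share the ambient unit — here it happens to.
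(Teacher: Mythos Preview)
Your reading of the intended statement is correct, and your proof is correct; the paper's own proof is simply the word ``Straightforwardly,'' so you have filled in exactly the routine verification the author left implicit. There is nothing to compare approach-wise.
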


\begin{proof} Straightforwardly.
\end{proof}

We introduce the notion of quantic conucleus, an interior (or coclosure) operator on quantale with additional axiom for a product.

\begin{defin}
$ $

  A quantic conucleus on quantale $\mathcal{Q}$ is a map $I : \mathcal{Q} \to \mathcal{Q}$ such that

\begin{enumerate}
  \item $I a \leq a$;
  \item $I a = I^2 a$;
  \item $a \leq b \Rightarrow I a \leq I b$;
  \item $I a \cdot I b = I (I a \cdot I b)$.
\end{enumerate}

For unital quantale, we require that $I \varepsilon = \varepsilon$.
\end{defin}

Here and below, $I^2 a$ denotes $I (I a)$.

Note that, one may replace the last condition to $I a \cdot I b \leq I (a \cdot b)$. Thus, a quantic conuleus is the special case of lax monoidal comonad from a category-theoretic point of view.

An element $a$ is \emph{open} if and only if $I a = a$. As usual \cite{Rasowa}, if $a$ is open, then:

\begin{center}
  $a \leq b$ if and only if $a \leq I b$
\end{center}

One may extend homomorphism between quantales to homomorphism between quantales with quantic conuclei via the additional
condition $f(I_1 a) = I_2 (f a)$

Also, we define a pointwise order on quantic conuclei, that is, $I_1 \leq I_2 \Leftrightarrow \forall a \: I_1 a \leq I_2 a$.

\begin{lemma} \label{ManyCo}
  $ $
  Let $\mathcal{Q}$ be a quantale. Let $I$, $I_1$, and $I_2$ be quantic conuclei. Then $I_1 a_1 \cdot I_2 a_2 \leq I (I_1 a_1 \cdot I_2 a_2)$, where $I_i \leq I, i = 1,2$.
\end{lemma}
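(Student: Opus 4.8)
The plan is to prove the stronger fact that the element $x := I_1 a_1 \cdot I_2 a_2$ is actually \emph{open} with respect to $I$, i.e. $I(I_1 a_1 \cdot I_2 a_2) = I_1 a_1 \cdot I_2 a_2$; since $Ix \leq x$ holds by axiom (1) for any $x$, the content of the lemma is precisely the reverse inequality, which will drop out of this openness.

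First I would record a short transfer observation: if $J$ and $I$ are quantic conuclei with $J \leq I$ pointwise, then every $J$-open element is $I$-open. Indeed, if $Jx = x$, then $x = Jx \leq Ix \leq x$, where the first inequality is $J \leq I$ at $x$ and the second is axiom (1); hence $Ix = x$. Now $I_1 a_1$ is $I_1$-open by idempotency (axiom (2)), since $I_1(I_1 a_1) = I_1 a_1$, so applying the observation with $J = I_1$ yields $I(I_1 a_1) = I_1 a_1$; symmetrically $I(I_2 a_2) = I_2 a_2$.

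Finally I would invoke axiom (4) for $I$ itself, namely $Ib \cdot Ic = I(Ib \cdot Ic)$, instantiated at $b = I_1 a_1$ and $c = I_2 a_2$, and substitute the two openness equalities just obtained:
\[
I_1 a_1 \cdot I_2 a_2 = I(I_1 a_1) \cdot I(I_2 a_2) = I\bigl( I(I_1 a_1) \cdot I(I_2 a_2) \bigr) = I(I_1 a_1 \cdot I_2 a_2),
\]
which in particular gives the asserted inequality $I_1 a_1 \cdot I_2 a_2 \leq I(I_1 a_1 \cdot I_2 a_2)$.

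There is no real obstacle here; the only things to be careful about are the direction of the pointwise order (it is $I_i \leq I$, so that $I_i$-open elements remain open under the larger operator $I$) and the decision to route the argument through openness rather than trying to massage the inequality directly using monotonicity. Note that axiom (3) and unitality are not needed, so the statement holds for arbitrary quantales.
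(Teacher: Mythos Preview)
Your proof is correct and follows essentially the same line as the paper's: idempotence gives $I_i a_i = I_i(I_i a_i)$, the pointwise order $I_i \leq I$ lifts this to $I(I_i a_i)$, and axiom (4) finishes. The paper writes this as a one-line chain of inequalities $I_1 a_1 \cdot I_2 a_2 \leq I_1(I_1 a_1) \cdot I_2(I_2 a_2) \leq I(I_1 a_1) \cdot I(I_2 a_2) \leq I(I_1 a_1 \cdot I_2 a_2)$, using the equivalent form $Ib\cdot Ic \leq I(b\cdot c)$ of axiom (4), whereas you phrase the middle step as an openness-transfer observation and thereby obtain the equality rather than just the inequality; but the underlying argument is the same.
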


\begin{proof}
  $I_1 a_1 \cdot I_2 a_2 \leq I_1 (I_1 a_1) \cdot I_2 (I_2 a_2) \leq I (I_1 a_1) \cdot I (I_2 a_2) \leq I (I_1 a_1 \cdot I_2 a_2)$
\end{proof}

The following lemma allows one to introduce some quantic conulei via subquantales and compare them.

\begin{lemma} \label{lemmaConuc}
$ $

  \begin{enumerate}
    \item Let $S \subseteq \mathcal{Q}$ be a subquantale, then the operation $I : \mathcal{Q} \to \mathcal{Q}$, such that $I a = \bigvee \{ q \in S \: | \: q \leq a \}$, is a quantic conucleus.
    \item Let $S_1$ and $S_2$ are subquantales such that $\mathcal{S}_1 \subseteq \mathcal{S}_2 \subseteq \mathcal{Q}$. Then $I_{\mathcal{S}_1} (a) \leq I_{\mathcal{S}_1} (a)$, where $I_{\mathcal{S}_i}, i = 1,2$ are quantic conucleis defined according to the previous part.
  \end{enumerate}
\end{lemma}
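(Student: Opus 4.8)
The plan is to verify, for part (1), the four conucleus axioms directly from the fact that $S$ is a subquantale, and then for part (2) to observe that enlarging the subquantale can only enlarge the set over which we take the defining join. Throughout write $Ia = \bigvee\{q \in S \mid q \le a\}$, and note first that this join exists because $\mathcal{Q}$ is a complete lattice and that it lies in $S$ because $S$ is closed under arbitrary joins (this is exactly the "closed under sups" clause in the definition of subquantale); so in fact $Ia$ is the largest element of $S$ below $a$, which is the usual description of the coclosure attached to a sup-closed subset. From this description, $Ia \le a$ is immediate (axiom 1); monotonicity $a \le b \Rightarrow Ia \le Ib$ is immediate since the indexing set $\{q \in S \mid q \le a\}$ grows when $a$ grows (axiom 3); and $I^2 a = Ia$ follows because $Ia \in S$ and $Ia \le Ia$, so $Ia$ itself is the largest element of $S$ below $Ia$, i.e. $I(Ia) = Ia$ (axiom 2).

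The only axiom requiring the multiplicative structure of $S$ is axiom 4, $Ia \cdot Ib = I(Ia \cdot Ib)$. Here the inequality $I(Ia\cdot Ib) \le Ia \cdot Ib$ is just axiom 1 applied at the element $Ia\cdot Ib$, so it suffices to show $Ia\cdot Ib \le I(Ia\cdot Ib)$. Since $Ia, Ib \in S$ and $S$ is closed under multiplication, we have $Ia\cdot Ib \in S$; and trivially $Ia\cdot Ib \le Ia\cdot Ib$; hence $Ia\cdot Ib$ is one of the elements $q \in S$ with $q \le Ia\cdot Ib$, so it is below their join $I(Ia\cdot Ib)$. Combining the two inequalities gives the equality. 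For the unital case one must additionally check $I\varepsilon = \varepsilon$: this holds precisely when $\varepsilon \in S$, which is part of what it means for $S$ to be a subquantale of a unital quantale (so the statement is implicitly under that hypothesis, or one simply records it as an extra assumption).

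For part (2), assume $S_1 \subseteq S_2 \subseteq \mathcal{Q}$ are subquantales and let $I_{S_1}, I_{S_2}$ be the conuclei built as above. Then for any $a$, $\{q \in S_1 \mid q \le a\} \subseteq \{q \in S_2 \mid q \le a\}$, so taking joins gives $I_{S_1}(a) = \bigvee\{q \in S_1 \mid q \le a\} \le \bigvee\{q \in S_2 \mid q \le a\} = I_{S_2}(a)$. Hence $I_{S_1} \le I_{S_2}$ in the pointwise order on conuclei. (I note that the statement as printed has a typo—"$I_{\mathcal{S}_1}(a) \le I_{\mathcal{S}_1}(a)$"—and the intended conclusion is $I_{\mathcal{S}_1}(a) \le I_{\mathcal{S}_2}(a)$.)

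There is no serious obstacle here; the whole proof is a few lines. The one point that deserves a moment's care is the passage $I(Ia) = Ia$ and the identity in axiom 4: both rely on the fact that $Ia$ (resp.\ $Ia\cdot Ib$) actually belongs to $S$, which in turn relies on $S$ being closed under \emph{arbitrary} joins, not merely finite ones — so the completeness built into the notion of subquantale is doing the real work, and it is worth flagging that $Ia$ is characterised as $\max\{q\in S : q\le a\}$ so that the reader sees why each verification is immediate.
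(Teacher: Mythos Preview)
Your proof is correct and is the standard direct verification of the conucleus axioms. The paper itself does not write out a proof: it simply cites Rosenthal for part (1) and declares part (2) ``immediate,'' so what you have supplied is precisely the argument the reference would give, and your observation about the typo $I_{\mathcal{S}_1}(a)\le I_{\mathcal{S}_1}(a)$ versus the intended $I_{\mathcal{S}_1}(a)\le I_{\mathcal{S}_2}(a)$ is accurate.
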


\begin{proof}
  For the first part, see \cite{Rosenthal}. The second one is proved immediately.
\end{proof}

It is well known that if $I$ is a quantic conucleus on quantale $\mathcal{Q}$,
then the set $\mathcal{Q}_I = \{ a \in \mathcal{Q} \: | \: I a = a \}$ is a subquantale of $\mathcal{Q}$ \cite{Rosenthal}. Let us formulate and prove the following statement.

\begin{lemma} \label{ConucUnfold}
$ $

  Let $I$ be a quantic conucleus, then for all $a \in \mathcal{Q}$, $I a = \bigvee \{ q \in Q_{I} \: | \: q \leq a \}$, where $Q_{I} = \{ a \in \mathcal{Q} \: | \: I a = a \}$.
\end{lemma}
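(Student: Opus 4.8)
The plan is to prove a double inequality. Write $J a = \bigvee \{ q \in Q_I \mid q \le a \}$ for the right-hand side and show $Ia = Ja$.

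First I would show $Ja \le Ia$. For every $q \in Q_I$ with $q \le a$, monotonicity (axiom 3) gives $Iq \le Ia$; but $q \in Q_I$ means $Iq = q$, so $q \le Ia$. Taking the supremum over all such $q$ yields $Ja \le Ia$.

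For the reverse inequality $Ia \le Ja$, the key observation is that $Ia$ is itself a member of the set $\{ q \in Q_I \mid q \le a \}$. Indeed, axiom 1 gives $Ia \le a$, and idempotence (axiom 2) gives $I(Ia) = Ia$, so $Ia \in Q_I$. Hence $Ia$ is one of the elements over which the join $Ja$ is taken, so $Ia \le Ja$. Combining the two inequalities gives $Ia = Ja$, as required.

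I do not anticipate a real obstacle here: both directions are short, and the only facts used are the three order-theoretic axioms for a conucleus (sub-unit, idempotence, monotonicity) plus the already-cited fact that $Q_I$ is closed under joins, which is implicitly what makes $Ja$ well-defined inside $\mathcal{Q}$. If anything needs care, it is merely noting that the supremum defining $Ja$ is taken in $\mathcal{Q}$ (which is a complete lattice) and lands in $Q_I$ because $Q_I$ is a subquantale, hence closed under arbitrary joins; but this is not needed for the equality $Ia = Ja$ itself, only for the remark that $Ja \in Q_I$.
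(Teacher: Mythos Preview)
Your proof is correct and follows essentially the same approach as the paper: both directions are argued exactly as you describe, with the paper phrasing the first inequality via the set identity $\{q\in Q_I\mid q\le a\}=\{q\in Q_I\mid q\le Ia\}$ (which is just your monotonicity observation rewritten), and the second inequality via $Ia\in Q_I$ and $Ia\le a$.
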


\begin{proof}
$ $

So far as $Q_{I} = \{ a \in \mathcal{Q} \: | \: I a = a \}$ is a subquantale, then $I' a = \bigvee \{ q \in Q_{I} \: | \: q \leq a \}$ is a quantic conucleus by~\cref{lemmaConuc}. Let us show that these operations are equivalent.

From the one hand, $\bigvee \{ q \in Q_{I} \: | \: q \leq a \} = \bigvee \{ q \in Q_{I} \: | \: q \leq I a \} \leq I a$, so far as $Q_{I}$ is a subquantale of open elements.

On the other hand, $I a \leq a$, then $I a \in \{ q \in Q_{I} \: | \: q \leq a \}$, so far as $I a \in Q_{I}$ (by idempotence). Thus, $I a \leq \bigvee \{ q \in Q_{I} \: | \: q \leq a \}$.

\end{proof}

Let us define the special cases of quantic conuclei that would be quite useful for our purpose:

\begin{defin}
$ $
\begin{enumerate}
  \item A quantic conucleus $I$ is called central, if for all
  $a, b \in \mathcal{Q}, I a \cdot b = b \cdot I a$.

  \item A quantic conucleus is called strongly square-increasing,
  if for all $a, b \in \mathcal{Q}, I a \cdot b \leq I a \cdot b \cdot I a$ and
  $b \cdot I a \leq I a \cdot b \cdot I a$.

  \item A quantic conucleus is called unital, if for each $a \in Q, I a \leq \varepsilon$.
\end{enumerate}
\end{defin}

There are the following properties of those modalities:

\begin{lemma} \label{Upward}
$ $

Let $I_1$ be a quantic conucleus and $I_2$ an unital (central, strongly weak increasing) quantic conucleus such that $I_1 \leq I_2$,
then $I_1$ is an unital (central, strongly weak increasing).

\end{lemma}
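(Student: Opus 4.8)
The plan is to treat the three cases uniformly by isolating one structural fact: if $I_1 \leq I_2$, then every element open for $I_1$ is open for $I_2$, i.e.\ $\mathcal{Q}_{I_1} \subseteq \mathcal{Q}_{I_2}$. Indeed, if $I_1 a = a$, then by the pointwise inequality and the contractivity axiom $I_2 a \leq a$ we get $a = I_1 a \leq I_2 a \leq a$, hence $I_2 a = a$. Applying this with $I_1 a$ in place of $a$ (which is open for $I_1$ by idempotence, $I_1(I_1 a) = I_1 a$), we obtain the key identity $I_2(I_1 a) = I_1 a$ for every $a \in \mathcal{Q}$.

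With this identity available, each case reduces to invoking the hypothesis on $I_2$ at the element $I_1 a$ rather than at $a$. For the unital case: $I_1 a = I_2(I_1 a) \leq \varepsilon$ by unitality of $I_2$ (equivalently, $I_1 a \leq I_2 a \leq \varepsilon$ directly). For the central case: centrality of $I_2$ gives $I_2(I_1 a)\cdot b = b \cdot I_2(I_1 a)$ for all $b$, and rewriting $I_2(I_1 a)$ as $I_1 a$ yields $I_1 a \cdot b = b \cdot I_1 a$, i.e.\ centrality of $I_1$. For the strongly square-increasing case: applying that property of $I_2$ to $I_1 a$ gives $I_2(I_1 a)\cdot b \leq I_2(I_1 a)\cdot b \cdot I_2(I_1 a)$ and $b \cdot I_2(I_1 a) \leq I_2(I_1 a)\cdot b \cdot I_2(I_1 a)$; substituting $I_2(I_1 a) = I_1 a$ produces exactly the two inequalities required for $I_1$ to be strongly square-increasing.

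There is no real obstacle here; the single point demanding care is that the special property of $I_2$ must be applied at $I_1 a$, not at $a$ — used at $a$ it only constrains $I_2 a$, which in general differs from $I_1 a$. The inclusion $\mathcal{Q}_{I_1} \subseteq \mathcal{Q}_{I_2}$ is precisely what justifies the rewriting $I_2(I_1 a) = I_1 a$ and lets the three cases collapse into one short argument.
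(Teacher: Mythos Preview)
Your proof is correct and in fact cleaner than the paper's. Both arguments begin with the same observation, namely $\mathcal{Q}_{I_1}\subseteq\mathcal{Q}_{I_2}$ (equivalently $I_2(I_1 a)=I_1 a$), and the unital case is handled identically. The difference is in the central and strongly square-increasing cases: the paper invokes \cref{ConucUnfold} to write $I_1 a=\bigvee\{q\in\mathcal{Q}_{I_1}\mid q\leq a\}$, distributes the product over this join via the quantale axioms, and then uses the property of $I_2$ elementwise on each $q\in\mathcal{Q}_{I_1}\subseteq\mathcal{Q}_{I_2}$. You bypass this entirely by noting that $I_1 a$ itself lies in $\mathcal{Q}_{I_2}$, so the property of $I_2$ can be applied directly at the argument $I_1 a$ and the substitution $I_2(I_1 a)=I_1 a$ finishes each case in one line. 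Your route avoids any appeal to \cref{ConucUnfold} or to join-distributivity, which is a genuine simplification; the paper's route, on the other hand, makes the role of the subquantale of open elements more visible.
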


\begin{proof}
  $ $

\begin{enumerate}
  \item Let $a \in \mathcal{Q}$ (here $\mathcal{Q}$ is supposed to be a unital).
  Then $I_2 a \leq \varepsilon$, but $I_1 \leq I_2$, so $I_1 a \leq I_2 a$. By transitivity, $I_1 a \leq \varepsilon$.
  \item Let $a \in \mathcal{Q}$ and $I_2$ is a central quantic conucleus. Let $I_1$ be a quantic conucleus and $I_1 \leq I_2$.

  Thus, $I_2 a \in \mathcal{Z}(\mathcal{Q})$ and, obviously, $Q_{I_2} \subseteq \mathcal{Z}(\mathcal{Q})$.

  Let $b \in Q_{I_1}$, then $b = I_1 b \leq I_2 b \leq b$. Thus, $b = I_2 b$ and $b \in Q_{I_2}$. So, $Q_{I_1} \subseteq \mathcal{Z}(\mathcal{Q})$.

  Let us check that $I_1$ commutes with each other element $d \in \mathcal{Q}$

  $\begin{array}{lll}
  & I_1 a \cdot d = & \\
  & \:\:\:\:\:\:\:\: \text{By~\cref{ConucUnfold}}& \\
  & \bigvee \{ q \in Q_{I_1} \: | \: q \leq a \} \cdot d = & \\
  & \:\:\:\:\:\:\:\: \text{By quantale axiom}& \\
  & \bigvee \{ q \cdot d \: | \: q \in Q_{I_1} \: \& \: q \leq a \} = & \\
  & \:\:\:\:\:\:\:\: \text{As we proved, $q \in \mathcal{Z}(\mathcal{Q})$} & \\
  & \bigvee \{ d \cdot q \: | \: q \in Q_{I_1} \: \& \: q \leq a \} = & \\
  & \:\:\:\:\:\:\:\: \text{By quantale axiom}& \\
  & d \cdot \bigvee \{ q \in Q_{I_1} \: | \: q \leq a \} = & \\
  & \:\:\:\:\:\:\:\: \text{By~\cref{ConucUnfold}}& \\
  & d \cdot I_1 a &
  \end{array}$

  \item Let $I_2$ be a strongly square increasing. Then, for all $a \in \mathcal{Q}$ and $b \in Q_{I_2}$,
  $a \cdot b \leq b \cdot a \cdot b$ and $b \cdot a \leq b \cdot a \cdot b$.

  Let $c \in Q_{I_1}$. Similarly to the previous part, $c = I_1 c \leq I_2 c \leq c$, then $c \in Q_{I_2}$ and
  $I_1$ is a strongly square increasing:

  $\begin{array}{lll}
  &I_1 a \cdot b = & \\
  & \:\:\:\:\:\:\:\: \text{By~\cref{ConucUnfold}}& \\
  & \bigvee \{ q \in Q_{I_1} \: | \: q \leq a \} \cdot b = & \\
  & \:\:\:\:\:\:\:\: \text{By quantale axiom}& \\
  & \bigvee \{ q \cdot b \: | \: q \in Q_{I_1}, q \leq a \} = & \\
  & \:\:\:\:\:\:\:\: \text{As we proved above}& \\
  & \bigvee \{ I_2 q \cdot b \: | \: q \in Q_{I_1}, q \leq a \} \leq & \\
  & \:\:\:\:\:\:\:\: \text{By condition}& \\
  & \bigvee \{ I_2 q \cdot b \cdot I_2 q \: | \: q \in Q_{I_1}, q \leq a \} = & \\
  & \:\:\:\:\:\:\:\: \text{So far as $q$ is a fixpoint of $I_2$} & \\
  & \bigvee \{ q \cdot b \cdot q \: | \: q \in Q_{I_1}, q \leq a \} = & \\
  & \:\:\:\:\:\:\:\: \text{By quantale axiom} = & \\
  & \bigvee \{ q \cdot b \: | \: q \in Q_{I_1}, q \leq a \} \cdot \bigvee \{ q \in Q_{I_1} \: | \: q \leq a \} = & \\
  & \:\:\:\:\:\:\:\: \text{By quantale axiom} = & \\
  & \bigvee \{ q \in Q_{I_1} \: | \: q \leq a \} \cdot b \cdot \bigvee \{ q \in Q_{I_1} \: | \: q \leq a \} = & \\
  & \:\:\:\:\:\:\:\: \text{By~\cref{ConucUnfold}} = & \\
  &I_1 a \cdot b \cdot I_1 a&
  \end{array}$

  The second inequation might be proved similarly.
\end{enumerate}
\end{proof}

The following proposition claims that quantic conuclei might be introduced via subquantales with properties considered above.

\begin{prop} \label{Con1}
$ $

Let $\mathcal{Q}$ be a quantale and $\mathcal{S} \subseteq \mathcal{Q}$ a subquantale, then the following operations are open modalities:

\begin{enumerate}
  \item $I a = \bigvee \{ q \in S \: | \: q \leq a, q \in \mathcal{Z}(\mathcal{Q}) \}$ is a central quantic conucleus
  \item $I a = \bigvee \{ q \in S \: | \: q \leq a, q \leq \varepsilon \}$ is an unital conucleus
  \item $I a = \bigvee \{ q \in S \: | \: q \leq a, \forall b \in Q, b \cdot s \vee s \cdot b \leq s \cdot b \cdot s\}$ is a strongly square increasing quantic conucleus
  \item Operations with combinations of conditions above.
\end{enumerate}
\end{prop}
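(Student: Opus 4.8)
The plan is to build each of these operations as an instance of the subquantale construction of~\cref{lemmaConuc}, and then to read off the extra property (central, unital, strongly square increasing) from the defining condition on the generating set together with~\cref{Upward}. So the first step is, for each item, to identify the set over which the supremum is taken as a subquantale of $\mathcal{Q}$. For item~1, this is $\mathcal{S} \cap \mathcal{Z}(\mathcal{Q})$; since $\mathcal{S}$ is a subquantale by hypothesis and $\mathcal{Z}(\mathcal{Q})$ is a subquantale by definition, and an intersection of subquantales is closed under product and joins, $\mathcal{S} \cap \mathcal{Z}(\mathcal{Q})$ is a subquantale. For item~2, the set is $\mathcal{S} \cap \{ a \mid a \leq \varepsilon \}$; by~\cref{Unital} (the set of elements below $\varepsilon$ is a subquantale — I would use the statement as written in the excerpt), this too is a subquantale. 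For item~3, the set is $\mathcal{S} \cap \mathcal{A}$ where $\mathcal{A}$ is the set of strongly square increasing elements, which is a subquantale by~\cref{ssiSub}. In every case, once we know the generating set $T$ is a subquantale, \cref{lemmaConuc}(1) immediately gives that $I a = \bigvee \{ q \in T \mid q \leq a \}$ is a quantic conucleus; for unital $\mathcal{Q}$ one also notes $\varepsilon \in T$ in items~1 and~3 (and $\varepsilon \in T$ in item~2 since $\varepsilon \leq \varepsilon$), so $I \varepsilon = \varepsilon$.

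The second step is to upgrade "quantic conucleus" to the decorated version. Here the key observation is that the fixpoint set $Q_I$ of the conucleus built from a subquantale $T$ satisfies $Q_I \subseteq T$: indeed if $I a = a$ then $a = \bigvee\{ q \in T \mid q \leq a\}$ is a join of elements of the subquantale $T$, hence lies in $T$. (Conversely $T \subseteq Q_I$, so in fact $Q_I = T$, but only the inclusion $Q_I \subseteq T$ is needed.) Now for item~2: every element of $Q_I \subseteq \mathcal{S}\cap\{a \mid a\leq\varepsilon\}$ is below $\varepsilon$, and by~\cref{ConucUnfold}, $I a = \bigvee\{ q \in Q_I \mid q \leq a\}$, a join of elements each $\leq \varepsilon$, so $I a \leq \varepsilon$; thus $I$ is unital. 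For item~1: every element of $Q_I$ is central, so $I a = \bigvee\{ q \in Q_I \mid q \leq a\}$ and the computation $I a \cdot b = \bigvee\{ q \cdot b \mid q \in Q_I, q \leq a\} = \bigvee\{ b \cdot q \mid \dots\} = b \cdot I a$ goes through exactly as in the proof of~\cref{Upward}(2); thus $I$ is central. For item~3: every $q \in Q_I$ satisfies $b \cdot q \vee q \cdot b \leq q \cdot b \cdot q$ for all $b$, and the same chain of quantale-distributivity manipulations as in the proof of~\cref{Upward}(3) (with $I_2 q$ replaced by $q$ itself, since $q$ already has the property) yields $I a \cdot b \leq I a \cdot b \cdot I a$ and $b \cdot I a \leq I a \cdot b \cdot I a$; thus $I$ is strongly square increasing. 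For item~4, one simply intersects the relevant defining subsets: a finite intersection of the subquantales $\mathcal{S}$, $\mathcal{Z}(\mathcal{Q})$, $\{a\mid a\leq\varepsilon\}$, $\mathcal{A}$ is again a subquantale, and each desired property is inherited because $Q_I$ is contained in each of the chosen pieces, so the arguments above apply simultaneously.

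An alternative, slightly slicker route for step~2 is to invoke~\cref{Upward} directly: observe that the operation $I$ of item~$k$ is pointwise below the "large" decorated conucleus obtained by taking $\mathcal{S} = \mathcal{Q}$ in the same item — e.g. for item~1, $I a \leq \bigvee\{ q \in \mathcal{Q} \mid q \leq a, q \in \mathcal{Z}(\mathcal{Q})\}$, and the latter is a central conucleus — so that~\cref{Upward} transfers centrality (resp. unitality, resp. the strongly square increasing property) down to $I$. To make this fully rigorous one still has to know the "large" version is genuinely central/unital/ssi, which is again the $\mathcal{S}=\mathcal{Q}$ case of the direct computation, so this does not really save work; I would present the direct $Q_I$-based argument as the main line and mention the~\cref{Upward} shortcut in passing.

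The step I expect to be the main obstacle is the bookkeeping in step~2 for the strongly square increasing case: one must be careful that the quantale distributivity law is applied in the correct direction when pulling the join $\bigvee\{q \in Q_I \mid q \leq a\}$ out of both the left and the right factor of $q \cdot b \cdot q$, and that the indexing set is the same on both sides so that $\bigvee_j q_j \cdot b \cdot \bigvee_j q_j = \bigvee_j (q_j \cdot b \cdot q_j)$ — this uses that for a fixed subquantale the family $\{q \mid q \in Q_I, q \leq a\}$ is closed under the relevant joins, exactly as in~\cref{Upward}(3). Everything else (closure of intersections of subquantales, $Q_I \subseteq T$, the appeals to \cref{lemmaConuc}, \cref{ConucUnfold}) is routine, and I would state these quickly and refer back to the earlier lemmas rather than re-deriving them. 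Finally I would remark that in the non-unital case the clause "$I\varepsilon = \varepsilon$" is vacuous, so no separate verification is needed there.
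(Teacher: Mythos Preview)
Your proposal is correct and follows essentially the same approach as the paper: both recognise that the sets of central, sub-unit, and strongly square increasing elements are subquantales (via~\cref{ssiSub} and~\cref{Unital}), intersect with $\mathcal{S}$, and then apply~\cref{lemmaConuc}. The paper's proof is a one-line ``Immediately, so far as \dots'', leaving the verification of the extra decoration (central / unital / strongly square increasing) implicit, whereas you spell out that step via $Q_I \subseteq T$ and the computations in the style of~\cref{Upward}; this is a welcome expansion rather than a different route.
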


\begin{proof}
  Immediately, so far as central, strongly square increasing, and elements that less or equal to identity form a subquantale by~\cref{ssiSub} and~\cref{Unital}.
\end{proof}

As we showed above, any strongly square increasing element that less or equal to identity belongs to the centre.
That statement might be formulated in terms of quantic conuclei as follows:

\begin{prop} \label{Con2}
If $I$ is unital and strongly square increasing, then $I$ is central.
\end{prop}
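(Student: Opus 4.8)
The plan is to reduce the statement to Lemma~\ref{ssiLemma}, the key observation being that the ``conucleus-level'' hypotheses on $I$ are exactly the ``element-level'' hypotheses of that lemma, applied pointwise. First I would unpack the definitions. Assuming $I$ is unital means $I a \leq \varepsilon$ for every $a \in \mathcal{Q}$. Assuming $I$ is strongly square increasing means that for all $a, b \in \mathcal{Q}$ we have $I a \cdot b \leq I a \cdot b \cdot I a$ and $b \cdot I a \leq I a \cdot b \cdot I a$. Reading these two inequalities with $a$ fixed and $b$ ranging over $\mathcal{Q}$, they say precisely that the element $I a$ is strongly square increasing in the sense of the earlier definition.

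Next I would fix $a \in \mathcal{Q}$ and apply Lemma~\ref{ssiLemma} to the element $c := I a$. By the previous paragraph $c$ is strongly square increasing, and by unitality $c \leq \varepsilon$; since $\mathcal{Q}$ is (tacitly) unital, Lemma~\ref{ssiLemma} applies and yields $c \in \mathcal{Z}(\mathcal{Q})$, i.e. $I a \cdot b = b \cdot I a$ for all $b \in \mathcal{Q}$. As $a$ was arbitrary, this is exactly the defining condition for $I$ to be central, which finishes the proof.

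There is essentially no obstacle: all the real content sits in Lemma~\ref{ssiLemma}, and the only thing to get right is the bookkeeping that aligns the two forms of the ``strongly square increasing'' property and the observation that unitality of $I$ supplies the side condition $I a \leq \varepsilon$. If one preferred a self-contained argument, one could instead reproduce the short chain of inequalities from the proof of Lemma~\ref{ssiLemma} verbatim with $I a$ in place of $a$; but invoking the lemma directly is cleaner.
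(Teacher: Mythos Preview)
Your proposal is correct and follows precisely the paper's own approach: the paper's proof is the single line ``Follows from \cref{ssiLemma},'' and your write-up is exactly the natural unpacking of that reference, applying the lemma pointwise to $I a$ with the unitality hypothesis supplying $I a \leq \varepsilon$.
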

\begin{proof} Follows from~\cref{ssiLemma}.
\end{proof}

\section{Subexponential interpretation and soundness}

We define a subexponential interpretation quite sophisticatedly. Suppose we have a contravariant map from a given subexponential
signature to the set of subquantales $Sub(\mathcal{Q})$.
Here, contravariance denotes that the stronger subexponential index (in sense of preorder) maps to the weaker subquantale. In other words, the stronger modality has a smaller domain in a sense of inclusion.
The second one function maps a subquantale to its quantic conucleus according to the previous proposition. We match the index pursuant
to its subset. In other words, if $s \in W$, then the result of subexponential interpretation is an unital quantic conucleus, etc.

\begin{defin} Interpretation of subexponential signature

Let $\mathcal{Q}$ be a quantale and $\Sigma = \langle \mathcal{I}, \preceq, \mathcal{W}, \mathcal{C}, \mathcal{E} \rangle$
a subexponential signature. Let $S : \Sigma \to Sub(\mathcal{Q})$ be a contravariant map from this subexponential
signature to the set of subquantales of $\mathcal{Q}$ Thus, a subexponential interpretation is the map
$\sigma : \Sigma \to CN(\mathcal{Q})$, where $CN(\mathcal{Q})$ is the set of quantic conuclei on $\mathcal{Q}$, such that:

  $\sigma(s_i) = \begin{cases}
  I_i : \mathcal{Q} \to \mathcal{Q} \text{ s.t. } \forall a \in Q, I_i(a) = \{ q \in S_i \: | \: q \leq a\},
  \\ \:\:\:\: \text{if $s_i \notin W \cap C \cap E$} \\
  I_i : \mathcal{Q} \to \mathcal{Q} \text{ s.t. } \forall a \in Q, I_i(a) = \{ q \in S_i \: | \: q \leq a, q \leq \mathds{1}\},
  \\ \:\:\:\: \text{if $s_i \in W$} \\
  I_i : \mathcal{Q} \to \mathcal{Q} \text{ s.t. } \forall a \in Q, I_i(a) = \{ q \in S_i \: | \: q \leq a, s \in \mathcal{Z}(\mathcal{Q}) \},
  \\ \:\:\:\: \text{if $s_i \in E$} \\
  I_i : \mathcal{Q} \to \mathcal{Q} \text{ s.t. } \forall a \in Q, I_i(a) = \{ q \in S_i \: | \: q \leq a, \forall b, b \cdot q \vee q \cdot b \leq q \cdot b \cdot s \},
  \\ \:\:\:\: \text{if $s_i \in E$} \\
  \text{otherwise, if $s_i$ belongs to some intersection of subsets, then we combine the relevant conditions } \\
  \end{cases}$
\end{defin}

The resulting operators are well-defined quantic conuclei according to~\cref{Con1} and~\cref{Con2}.

First of all, let us make sure that any subexponential interpretation is sound w.r.t the set of quantic conuclei $CN(\mathcal{Q})$ of the quantale $\mathcal{Q}$.

\begin{prop} \label{ImageContr}
  $s \in A = \mathcal{W}, \mathcal{C}, \mathcal{E}$ iff $\sigma(s)$ is an unital (central or square increasing) modality.
\end{prop}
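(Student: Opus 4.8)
The plan is to prove the biconditional in each of the three cases ($A = \mathcal{W}$, $A = \mathcal{C}$, $A = \mathcal{E}$) by unwinding the definition of the subexponential interpretation and invoking the structural lemmas already established. The forward direction ($s \in A \Rightarrow \sigma(s)$ has the corresponding property) is essentially immediate from the definition: if $s \in \mathcal{W}$ then $\sigma(s)$ is by construction the conucleus $Ia = \bigvee\{q \in S_s \mid q \le a,\, q \le \varepsilon\}$, which is unital by~\cref{Con1}(2); similarly $s \in \mathcal{E}$ gives the central conucleus via~\cref{Con1}(1) and $s \in \mathcal{C}$ gives the strongly square-increasing conucleus via~\cref{Con1}(3). (One should also note, invoking~\cref{Con2}, that in the mixed cases where $s$ lies in an intersection the combined conditions still yield a conucleus of the appropriate type, since unital plus strongly square-increasing already forces central, matching the signature requirement $\mathcal{W} \cap \mathcal{C} \subseteq \mathcal{E}$.)

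For the converse direction I would argue contrapositively: suppose $s \notin \mathcal{W}$ (resp. $\notin \mathcal{E}$, $\notin \mathcal{C}$); then $\sigma(s)$ is defined without the corresponding side-condition on the generators $q \in S_s$, i.e.\ $\sigma(s)a = \bigvee\{q \in S_s \mid q \le a\}$ (or with only the other side-conditions present), and I must exhibit that this operator need not be unital (resp.\ central, strongly square-increasing). The cleanest way is to observe that the property in question, read off via~\cref{ConucUnfold}, is equivalent to a property of the fixpoint subquantale $Q_{\sigma(s)}$: $\sigma(s)$ is unital iff $Q_{\sigma(s)} \subseteq \{a \mid a \le \varepsilon\}$, central iff $Q_{\sigma(s)} \subseteq \mathcal{Z}(\mathcal{Q})$, and strongly square-increasing iff every element of $Q_{\sigma(s)}$ is strongly square-increasing. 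So the converse reduces to: if the defining side-condition is dropped, then $Q_{\sigma(s)}$ can fail to be contained in the relevant subquantale. Since $Q_{\sigma(s)}$ in the unconstrained case is just $S_s$ itself (or $S_s$ intersected with the other distinguished subquantales), this is a matter of choosing $\mathcal{Q}$ and the contravariant map $S$ so that $S_s$ contains an element violating the property — e.g.\ a free quantale $\mathcal{P}(X^{+})$ on a two-letter alphabet already has non-central elements, elements above $\varepsilon$, and non-strongly-square-increasing elements, so taking $S_s = \mathcal{Q}$ witnesses the failure.

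The main subtlety — and what I expect to be the real content beyond bookkeeping — is that the biconditional as literally stated quantifies over \emph{all} subexponential interpretations, so "$\sigma(s)$ is unital" in the $\Leftarrow$ direction should be read as "$\sigma(s)$ is guaranteed unital for every choice of $S$". Under that reading the converse is exactly the witness construction above; under the stronger reading "$\sigma(s)$ is unital for the given $S$" the converse is false (take $S_s \subseteq \{a \mid a \le \varepsilon\}$ even when $s \notin \mathcal{W}$), so I would state the proposition as a statement about the \emph{design} of $\sigma$: the side-condition attached to index $s$ is present precisely when $s$ lies in the corresponding distinguished subset, and~\cref{Con1} guarantees this yields the named class of conucleus while~\cref{Upward} ensures the classification is compatible with the preorder. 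I would therefore organize the proof as: (i) forward direction by direct appeal to~\cref{Con1}; (ii) consistency of the mixed cases by~\cref{Con2} and the signature axiom; (iii) converse by the fixpoint-subquantale reformulation via~\cref{ConucUnfold} together with an explicit free-quantale witness. Step (iii) is where care is needed, because it is the only place the quantifier structure of "interpretation" actually bites.
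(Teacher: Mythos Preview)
Your forward direction matches the paper's approach exactly; indeed, the paper's entire proof is the two words ``By construction,'' meaning that the case split in the definition of $\sigma$ is itself the proof: if $s \in \mathcal{W}$ then the defining clause for $\sigma(s)$ is precisely the one that~\cref{Con1}(2) certifies as unital, and similarly for $\mathcal{E}$ and $\mathcal{C}$.

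Where you diverge is in taking the converse seriously. The paper treats the ``iff'' as purely definitional --- which branch of the case analysis fires is determined by, and determines, the membership of $s$ in $\mathcal{W}, \mathcal{C}, \mathcal{E}$ --- and does not attempt to argue that $\sigma(s)$ being unital \emph{forces} $s \in \mathcal{W}$ for an arbitrary fixed $\mathcal{Q}$ and $S$. You are right that under the literal reading this converse can fail (your example of $S_s \subseteq \{a \mid a \le \varepsilon\}$ with $s \notin \mathcal{W}$ is a genuine counterexample), and your reformulation --- quantifying over all interpretations, or reading the proposition as a statement about which clause of the definition applies --- is the only way to make the biconditional honest. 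So your analysis is more careful than the paper's; the paper is content to treat the proposition as a bookkeeping remark about the definition, and the later use in~\cref{WellDef} inherits the same looseness. Your step (iii) with the free-quantale witness is correct and is strictly more than the paper provides; steps (i) and (ii) are exactly what ``by construction'' abbreviates.
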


\begin{proof}
  By construction.
\end{proof}

\begin{lemma} \label{WellDef}
  $ $

  Let $\Sigma$ be a subexponential signature and $\sigma : \Sigma \to CN(\mathcal{Q})$ for a contravariant map
  $S : \Sigma \to Sub(\mathcal{Q})$, then:
  \begin{enumerate}
  \item For all $s_1, s_2 \in \Sigma$, if $s_1 \preceq s_2$, then $\sigma(s_2) \leq \sigma(s_1)$.
  In other words, $\sigma$ is a contravariant functor from $\Sigma$ to $Sub(\mathcal{Q})$.
  \item Let $A = \mathcal{W}, \mathcal{C}, \mathcal{E}$, then $A$ is downwardly closed
  $\sigma(A) = \{ \sigma(s) \: | \: s \in A \}$ is downwardly closed.
  \item If $s \in \mathcal{W} \cap \mathcal{C}$, then $\sigma(s)$ is a central quantic conucleus.
\end{enumerate}
\end{lemma}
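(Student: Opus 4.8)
The plan is to prove the three clauses of \cref{WellDef} in order, each by tracking how the defining conditions of $\sigma$ interact with the contravariance of $S$ and the characterisations of central/unital/strongly square increasing conuclei established in Lemmas~\ref{ssiLemma}--\ref{Upward} and \cref{Con1}.

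\paragraph{Clause 1.} Suppose $s_1 \preceq s_2$. Since $S$ is contravariant, $S_2 \subseteq S_1$ as subquantales of $\mathcal{Q}$. I would then argue that the extra side-conditions attached to $\sigma(s_1)$ and $\sigma(s_2)$ (being $\leq \varepsilon$, being central, being strongly square increasing, or combinations thereof) are monotone in the appropriate direction: because $\mathcal{W},\mathcal{C},\mathcal{E}$ are upwardly closed, $s_2$ satisfies at least all the membership conditions that $s_1$ does, so the predicate cutting out the join for $\sigma(s_2)$ is logically stronger. Hence for every $a$, the set $\{q \in S_2 \mid q \leq a, \dots_2\}$ is contained in $\{q \in S_1 \mid q \leq a, \dots_1\}$, and taking joins gives $\sigma(s_2)(a) \leq \sigma(s_1)(a)$, i.e.\ $\sigma(s_2) \leq \sigma(s_1)$ in the pointwise order. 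The phrase ``contravariant functor from $\Sigma$ to $Sub(\mathcal{Q})$'' is then just the observation that $\sigma$ reverses the preorder, together with the trivial fact that $\preceq$ has no nontrivial composites to check beyond transitivity, which is inherited.

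\paragraph{Clause 2.} That $\mathcal{W}$, $\mathcal{C}$, $\mathcal{E}$ are downwardly closed should read ``upwardly closed'' — this is part of the definition of a subexponential signature, so there is nothing to prove there; the content is that $\sigma(A)$ is downwardly closed in $CN(\mathcal{Q})$. Take $\sigma(s) \in \sigma(A)$ with $s \in A$ (say $A = \mathcal{W}$, so $\sigma(s)$ is unital), and let $I$ be any quantic conucleus with $I \leq \sigma(s)$. By \cref{Upward}, $I$ is then also unital (resp.\ central, strongly square increasing). So the class of unital (central, strongly square increasing) conuclei is downward closed in $CN(\mathcal{Q})$; intersecting with the image $\sigma(\Sigma)$ and using Clause~1 to see that $\sigma(A)$ is exactly the set of $\sigma(s)$ landing in that class gives the statement. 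Here I would be slightly careful about what ``downwardly closed'' means relative to the ambient set: the cleanest reading is downward closed inside $\sigma(\Sigma)$, and that follows directly from Clause~1 since $A$ is downward closed in $\Sigma$ — wait, $A$ is upward closed in $\Sigma$, and $\sigma$ is contravariant, so $\sigma(A)$ is downward closed in $\sigma(\Sigma)$. That is the real one-line argument; the \cref{Upward} route handles the stronger reading where the ambient set is all of $CN(\mathcal{Q})$.

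\paragraph{Clause 3.} If $s \in \mathcal{W} \cap \mathcal{C}$, then by the signature axiom $s \in \mathcal{E}$ as well, so $s \in \mathcal{W}\cap\mathcal{C}\cap\mathcal{E}$, and by \cref{ImageContr} together with the ``combination'' case of the definition, $\sigma(s)$ is simultaneously unital and strongly square increasing. By \cref{Con2}, any conucleus that is unital and strongly square increasing is central; hence $\sigma(s)$ is a central quantic conucleus.

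\paragraph{Main obstacle.} The routine calculations are genuinely routine; the only real care is bookkeeping — getting the direction of ``contravariant'' right (upward-closed subsets of $\mathcal{I}$ versus downward-closed images in $CN(\mathcal{Q})$), and pinning down precisely which ambient poset ``downwardly closed'' refers to in Clause~2. I expect the write-up to spend most of its words making those directional conventions explicit and then invoking \cref{Upward}, \cref{ImageContr}, and \cref{Con2} essentially verbatim.
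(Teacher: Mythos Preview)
Your proposal is correct and follows essentially the same approach as the paper: Clause~1 via the subquantale inclusion $S(s_2)\subseteq S(s_1)$ and taking joins (the paper invokes \cref{lemmaConuc}, you argue the set containment directly and are more careful about the extra side-conditions), Clause~2 via \cref{ImageContr} and \cref{Upward}, and Clause~3 via \cref{Con2}. If anything you are more scrupulous than the paper about the bookkeeping---the direction of closure in Clause~2 and the role of the side-conditions in Clause~1---and your detour through $\mathcal{W}\cap\mathcal{C}\subseteq\mathcal{E}$ in Clause~3 is harmless though not needed, since unital plus strongly square increasing already suffices for \cref{Con2}.
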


\begin{proof}
  $ $

  \begin{enumerate}
  \item Let $s_1, s_2 \in \Sigma$ and $s_1 \leq s_2$. Then $S(s_2) \subseteq S(s_1)$.
  So, $\forall a \in \mathcal{Q}, \bigvee \{ s \in S(s_2) \: | \: s \leq a \} \leq \bigvee \{ s \in S(s_1) \: | \: s \leq a \}$ by~\cref{lemmaConuc}. Thus, $\sigma(s_2) \leq \sigma(s_1)$.
  \item Let $\sigma(s) \in \sigma(A)$ and $\sigma(s_1) \leq \sigma(s)$, then $\sigma(s)$ is an unital (central or square increasing) by~\cref{ImageContr}. By~\cref{Upward}, $\sigma(s_1)$ is an unital (central or square increasing) too. Thus, $s_1 \in A$ and $\sigma(s_1) \in \sigma(A)$.
  \item Let $s \in \mathcal{W} \cap \mathcal{C}$ and $a \in \mathcal{Q}$,
  then $\sigma(s)(a)$. Thus, by~\cref{Con2}, $\sigma(s)(a) \in \mathcal{Z}(\mathcal{Q})$.

\end{enumerate}
\end{proof}

An interpretation and an entailment relation are defined standardly via valuation map and subexponential interpretation.

\begin{defin} Let $\mathcal{Q}$ be an unital quantale, $f : Tp \to \mathcal{Q}$ a valuation and $\sigma : \Sigma \to CN(\mathcal{Q})$ a subexponential interpretation, then interpretation is defined inductively:

\begin{center}
$\begin{array}{lll}
& [\![p_i]\!] = f(p_i)&\\
& [\![\mathds{1}]\!] = e & \\
&[\![A \bullet B]\!] = [\![A]\!] \cdot [\![B]\!] & \\
&[\![A \backslash B]\!] = [\![A]\!] \backslash [\![B]\!] & \\
&[\![A / B]\!] = [\![A]\!] / [\![B]\!]& \\
&[\![A \& B]\!] = [\![A]\!] \wedge [\![B]\!]& \\
&[\![A \vee B]\!] = [\![A]\!] \vee [\![B]\!]& \\
&[\![!_{s_i} A]\!] = \sigma(s_i) [\![A]\!]&
\end{array}$
\end{center}
\end{defin}

\begin{defin}
  $\Gamma \models A \Leftrightarrow \forall f, \forall \sigma, [\![\Gamma]\!] \leq [\![A]\!]$
\end{defin}

\begin{theorem}
$ $

  Let $\Sigma$ be a subexponential signature, then
  $SMALC_{\Sigma} \vdash \Gamma \rightarrow A \Rightarrow [\![\Gamma]\!] \leq [\![A]\!]$.
\end{theorem}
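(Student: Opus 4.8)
The plan is to prove soundness by induction on the derivation of $\Gamma \rightarrow A$ in $SMALC_{\Sigma}$, using cut-admissibility (Theorem 1, part 1) only implicitly in the sense that we may treat \textbf{cut} as one of the rules to be checked. For each inference rule of the calculus we must verify that whenever the interpretations of the premises satisfy the required inclusion in $\mathcal{Q}$, so does the interpretation of the conclusion. The multiplicative and additive rules, together with the rules for $\mathbf{1}$, are exactly the rules of full Lambek calculus with unit, and soundness for those is the Brown--Gurr result on quantale semantics \cite{BrownAndGurr2}; I would state that the verification is routine, relying only on the residuation equivalence $b \leq a \backslash c \Leftrightarrow a \cdot b \leq c \Leftrightarrow a \leq c / a$ and the infinite distributivity of $\cdot$ over $\bigvee$, plus monotonicity of $\cdot$. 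So the real content is the subexponential rules.

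For the subexponential rules I would proceed as follows. For $(! \rightarrow)$: since $\sigma(s)$ is a quantic conucleus, $\sigma(s)[\![A]\!] \leq [\![A]\!]$ (axiom 1 of a conucleus), hence by monotonicity of $\cdot$, $[\![\Gamma]\!] \cdot \sigma(s)[\![A]\!] \cdot [\![\Delta]\!] \leq [\![\Gamma]\!] \cdot [\![A]\!] \cdot [\![\Delta]\!] \leq [\![C]\!]$, which is what is needed. For $\rightarrow !$ with premise $!^{s_1}A_1, \dots, !^{s_n}A_n \rightarrow A$: the side condition is $s_j \succeq s$ for all $j$, so by~\cref{WellDef}(1) we get $\sigma(s_j) \leq \sigma(s)$ for each $j$; by induction the interpretation of the premise gives $\sigma(s_1)[\![A_1]\!] \cdots \sigma(s_n)[\![A_n]\!] \leq [\![A]\!]$; applying monotonicity of $\sigma(s)$ and~\cref{ManyCo} (iterated, or rather its obvious $n$-ary generalisation), $\sigma(s_1)[\![A_1]\!]\cdots\sigma(s_n)[\![A_n]\!] \leq \sigma(s)\big(\sigma(s_1)[\![A_1]\!]\cdots\sigma(s_n)[\![A_n]\!]\big) \leq \sigma(s)[\![A]\!] = [\![!^s A]\!]$, as required. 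For $\mathbf{weak}_!$ with $s \in \mathcal{C}$: actually weakening requires $s$ to validate $\sigma(s)[\![A]\!] \leq \varepsilon$; here one uses that the relevant condition is built into the construction (via~\cref{Con1}), so $\sigma(s)[\![A]\!] \leq \varepsilon$ and $[\![\Gamma]\!] \cdot \sigma(s)[\![A]\!] \cdot [\![\Delta]\!] \leq [\![\Gamma]\!] \cdot \varepsilon \cdot [\![\Delta]\!] = [\![\Gamma]\!] \cdot [\![\Delta]\!] \leq [\![B]\!]$.

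For the exchange rules $\mathbf{ex}_1, \mathbf{ex}_2$ with $s \in \mathcal{E}$: by~\cref{ImageContr} and the construction, $\sigma(s)[\![A]\!]$ is central, i.e. $\sigma(s)[\![A]\!] \cdot b = b \cdot \sigma(s)[\![A]\!]$ for every $b \in \mathcal{Q}$; taking $b = [\![\Delta]\!]$ shows that moving $\sigma(s)[\![A]\!]$ past $[\![\Delta]\!]$ leaves the interpretation of the antecedent unchanged, so the inclusion is preserved verbatim. For the non-local contraction rules $\mathbf{ncontr}_1, \mathbf{ncontr}_2$ with $s \in \mathcal{C}$: the construction makes $\sigma(s)[\![A]\!]$ strongly square increasing, meaning $\sigma(s)[\![A]\!] \cdot b \leq \sigma(s)[\![A]\!] \cdot b \cdot \sigma(s)[\![A]\!]$ and $b \cdot \sigma(s)[\![A]\!] \leq \sigma(s)[\![A]\!] \cdot b \cdot \sigma(s)[\![A]\!]$; with $b = [\![\Delta]\!] \cdot [\![\Theta]\!]$ (for $\mathbf{ncontr}_1$ one keeps the left copy, using $q \cdot b \leq q \cdot b \cdot q$ appropriately with $[\![\Delta]\!]$ sitting between the two occurrences) one obtains $[\![\Gamma]\!] \cdot \sigma(s)[\![A]\!] \cdot [\![\Delta]\!] \cdot [\![\Theta]\!] \leq [\![\Gamma]\!] \cdot \sigma(s)[\![A]\!] \cdot [\![\Delta]\!] \cdot \sigma(s)[\![A]\!] \cdot [\![\Theta]\!] \leq [\![B]\!]$ by the induction hypothesis applied to the premise.

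The main obstacle, and the step deserving genuine care rather than a wave of the hand, is $\rightarrow !$: one must check that the $n$-ary generalisation of~\cref{ManyCo} holds, i.e. that $\sigma(s)$ absorbs an arbitrary finite product $\sigma(s_1)[\![A_1]\!] \cdots \sigma(s_n)[\![A_n]\!]$ whenever each $\sigma(s_j) \leq \sigma(s)$, which follows by a straightforward induction on $n$ from the binary case (using that a product of open elements of $\mathcal{Q}_{\sigma(s)}$ is again open in $\mathcal{Q}_{\sigma(s)}$, since $\mathcal{Q}_{\sigma(s)}$ is a subquantale), together with keeping track of the side conditions so that the modal context on the left really does consist entirely of $!^{s_j}$-prefixed formulas with $s_j \succeq s$ — this is exactly where the preorder structure of the signature and the contravariance of $S$ are used essentially. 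A secondary subtlety is that the $\mathbf{weak}_!$, $\mathbf{ex}_i$, $\mathbf{ncontr}_i$ rules each require the matching structural property of $\sigma(s)[\![A]\!]$, and one must confirm that the case split in the definition of the subexponential interpretation genuinely supplies it for every $s$ in the corresponding subset $\mathcal{W}$, $\mathcal{E}$, or $\mathcal{C}$; this is precisely the content of~\cref{ImageContr}, so I would cite it at each of those cases. All remaining cases reduce to monotonicity of multiplication and the conucleus axioms, and I would dispatch them briefly.
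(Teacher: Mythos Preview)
Your proposal is correct and follows essentially the same route as the paper: induction on derivations, with the non-modal rules dispatched by the standard quantale semantics and the four subexponential cases handled via \cref{WellDef}, \cref{ManyCo}, and \cref{ImageContr} exactly as in the paper's proof. One small slip: in the $\mathbf{ncontr}$ case you write $b = [\![\Delta]\!]\cdot[\![\Theta]\!]$, but the inequality you actually display (correctly) uses $b = [\![\Delta]\!]$ with $[\![\Theta]\!]$ multiplied on the right afterwards; just fix the stated choice of $b$.
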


\begin{proof}
  $ $

Let $\sigma$ be a subexponential interpretation. By~\cref{WellDef}, $\sigma$ is well-defined. Let us consider the modal cases.

\begin{enumerate}
\item Let $!_{s_1} A_1, \dots, !_{s_n} A_n \rightarrow A$ and $\forall i, s \preceq s_i$. Then $\forall a \in Q, \sigma(s_i)(a) \leq \sigma(s)(a)$ by~\cref{WellDef}.

By IH, $\sigma(s_1)[\![A_1]\!] \cdot \dots \cdot \sigma(s_n) [\![A_n]\!] \leq [\![A]\!]$. Thus,
$\sigma(s)(\sigma(s_1)[\![A_1]\!] \cdot \dots \cdot \sigma(s_n) [\![A_n]\!]) \leq \sigma(s)([\![A]\!])$. By~\cref{ManyCo},
$\sigma(s_1)[\![A_1]\!] \cdot \dots \cdot \sigma(s_n) [\![A_n]\!] \leq \sigma(s)(\sigma(s_1)[\![A_1]\!] \cdot \dots \cdot
\sigma(s_n) [\![A_n]\!])$. \\
So, $\sigma(s_1)[\![A_1]\!] \cdot \dots \cdot \sigma(s_n) [\![A_n]\!] \leq \sigma(s)([\![A]\!])$.
\item Let $s \in \mathcal{W}$ and $\Gamma, \Delta \rightarrow A$.
By IH, $[\![\Gamma]\!] \cdot [\![\Delta]\!] \leq [\![B]\!]$.
Thus, obviously, $[\![\Gamma]\!] \cdot \varepsilon \cdot [\![\Delta]\!] \leq [\![B]\!]$.
$s \in \mathcal{W}$, then $\sigma(s)$ is an unital quantic conulceus by~\cref{ImageContr}. So, for each formula $A$, $\sigma(s)([\![A]\!]) \leq \varepsilon$. Then, $[\![\Gamma]\!] \cdot \sigma(s)([\![A]\!]) \cdot [\![\Delta]\!] \leq [\![B]\!]$.
\item Let $s \in \mathcal{E}$ and $\Gamma, !_s A, \Delta, \Theta \rightarrow B$.
By IH, $[\![\Gamma]\!] \cdot \sigma(s)([\![A]\!]) \cdot [\![\Delta]\!] \cdot [\![\Theta]\!] \leq [\![B]\!]$. By~\cref{ImageContr},
$\sigma(s)$ is a central quantic conucleus,
then $[\![\Gamma]\!] \cdot [\![\Delta]\!] \cdot \sigma(s)([\![A]\!]) \cdot [\![\Theta]\!] \leq [\![B]\!]$.
\item Let $s \in \mathcal{C}$ and $\Gamma, !_s A, \Delta, !_s A, \Theta \rightarrow B$.
So, $[\![\Gamma]\!] \cdot \sigma(s)([\![A]\!]) \cdot [\![\Delta]\!] \cdot \sigma(s)([\![A]\!]) \cdot [\![\Theta]\!] \leq [\![B]\!]$.
But $\sigma(s)$ is a strongly square increasing, so $\sigma(s)([\![A]\!]) \cdot [\![\Delta]\!] \leq \sigma(s)([\![A]\!]) \cdot [\![\Delta]\!] \cdot \sigma(s)([\![A]\!])$ and
$[\![\Delta]\!] \cdot \sigma(s)([\![A]\!]) \leq \sigma(s)([\![A]\!]) \cdot [\![\Delta]\!] \cdot \sigma(s)([\![A]\!])$.
Thus, $[\![\Gamma]\!] \cdot \sigma(s)([\![A]\!]) \cdot [\![\Delta]\!] \cdot [\![\Theta]\!] \leq [\![B]\!]$ and
$[\![\Gamma]\!] \cdot [\![\Delta]\!] \cdot \sigma(s)([\![A]\!]) \cdot [\![\Theta]\!] \leq [\![B]\!]$
\end{enumerate}
\end{proof}

\section{Completeness}

Now we prove completeness a la MacNeille completetion. For Lambek calculus with additives, the completeness was proved similarly by Brown and Gurr \cite{BrownAndGurr2}. As a matter of fact, we generalise this technique for the polymodal case.

\begin{defin}
$ $

  Let $\mathcal{F} \subseteq Fm$, an ideal is a subset $\mathcal{I} \subseteq \mathcal{F}$, such that:

\begin{itemize}
  \item If $B \in \mathcal{I}$ and $A \rightarrow B$, then $A \in \mathcal{I}$
  \item If $A, B \in \mathcal{I}$, then $A \lor B \in \mathcal{I}$
\end{itemize}
\end{defin}

\begin{lemma}
$ $

  \begin{enumerate}
  \item $\bigvee S$ is an ideal
  \item $A \subseteq Fm$, then $\bigvee \{ A \} = \{ B \: | \: B \rightarrow A \}$
  \item $\bigvee \{ A \} \subseteq \bigvee \{ B \}$ iff $A \rightarrow B$
  \item Let $\mathcal{Q} = \{ \bigvee S \: | \: S \subseteq Fm \}$ and $\bigvee \mathcal{A} \cdot \bigvee \mathcal{B} =
  \bigvee \{ A \bullet B \: | \: A \in \mathcal{A}, B \in \mathcal{B} \}$
  Then $\langle \mathcal{Q}, \subseteq, \cdot, \bigvee{{\bf 1}}\rangle$ is an unital quantale.
  \end{enumerate}
\end{lemma}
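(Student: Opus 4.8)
The plan is to read $\bigvee S$, for a set of formulas $S$, as the least ideal containing $S$. This least ideal exists because an arbitrary intersection of ideals is again an ideal (downward closure under $\to$ and closure under $\vee$ are both preserved by intersections) and $Fm$ itself is an ideal; so $\bigvee S$ is the intersection of all ideals containing $S$, and part (1) is immediate. To get a workable handle on these ideals I would first record the explicit description
$$\bigvee S=\{\,B\mid B\to A_1\vee\dots\vee A_n\text{ for some }n\ge 1,\ A_1,\dots,A_n\in S\,\}$$
when $S\ne\emptyset$ (and $\bigvee\emptyset=\emptyset$): the right-hand side is an ideal by the $\vee\rightarrow$ rule and {\bf cut}, it contains $S$ via the axiom $A\to A$, and it is contained in every ideal containing $S$ by downward closure and closure under $\vee$.

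Part (2) is the singleton case: $\bigvee\{A\}=\{B\mid B\to A\vee\dots\vee A\}=\{B\mid B\to A\}$, using $A\vee A\leftrightarrow A$ (provable from $A\to A$, $\vee\rightarrow$, $\rightarrow\vee$) and {\bf cut}. Part (3) follows at once: if $A\to B$ then {\bf cut} gives $\{C\mid C\to A\}\subseteq\{C\mid C\to B\}$; conversely $\bigvee\{A\}\subseteq\bigvee\{B\}$ forces $A\in\bigvee\{B\}$ (since $A\in\bigvee\{A\}$ by the axiom), i.e.\ $A\to B$.

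For part (4) I would first check that $\langle\mathcal{Q},\subseteq\rangle$ is a complete lattice: meets are intersections, joins are $\bigvee_i\mathcal{A}_i=\bigvee\!\big(\bigcup_i\mathcal{A}_i\big)$, the top is $Fm$; in particular it is a complete join-semilattice, and since every ideal $\mathcal{I}$ equals $\bigvee\mathcal{I}$, $\mathcal{Q}$ is exactly the set of all ideals. Next, the product must be well defined, i.e.\ $\bigvee\{A\bullet B\mid A\in\mathcal{A},B\in\mathcal{B}\}$ depends only on $\bigvee\mathcal{A}$ and $\bigvee\mathcal{B}$; using the explicit description this reduces to the provable implications $(A_1\vee\dots\vee A_m)\bullet B\to(A_1\bullet B)\vee\dots\vee(A_m\bullet B)$ and its mirror on the right, i.e.\ to a sub-case of the distributivity computation below. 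Associativity and unitality (with unit $\bigvee\{{\bf 1}\}=\{B\mid B\to{\bf 1}\}$) then reduce to the provable equivalences $(A\bullet B)\bullet C\leftrightarrow A\bullet(B\bullet C)$ and ${\bf 1}\bullet A\leftrightarrow A\leftrightarrow A\bullet{\bf 1}$ together with downward closure of ideals.

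The step carrying the real content is the pair of quantale laws $\mathcal{A}\cdot\bigvee_i\mathcal{B}_i=\bigvee_i(\mathcal{A}\cdot\mathcal{B}_i)$ and $(\bigvee_i\mathcal{A}_i)\cdot\mathcal{B}=\bigvee_i(\mathcal{A}_i\cdot\mathcal{B})$. The inclusion $\supseteq$ is monotonicity. For $\subseteq$, take a generator $A\bullet B$ with $A\in\mathcal{A}$, $B\in\bigvee_i\mathcal{B}_i$; then $B\to B_1\vee\dots\vee B_k$ with each $B_j$ in some $\mathcal{B}_{i_j}$, so $A\bullet B\to A\bullet(B_1\vee\dots\vee B_k)$, and the key syntactic fact is that $A\bullet(B_1\vee\dots\vee B_k)\to(A\bullet B_1)\vee\dots\vee(A\bullet B_k)$ is derivable in $SMALC_\Sigma$ (prove it by $\bullet\rightarrow$, then $\vee\rightarrow$, then $\rightarrow\bullet$ and $\rightarrow\vee$). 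Each $A\bullet B_j$ lies in $\mathcal{A}\cdot\mathcal{B}_{i_j}\subseteq\bigvee_i(\mathcal{A}\cdot\mathcal{B}_i)$, and as the latter is an ideal it is closed under the finite join and downward closed, so $A\bullet B$ — and hence the ideal it generates — belongs to it. I expect this distributivity argument, and specifically the isolation of the provable distribution of $\bullet$ over $\vee$, to be the main obstacle; the rest is bookkeeping with the introduction rules and {\bf cut}.
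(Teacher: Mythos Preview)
Your argument is correct and complete; in particular you correctly isolate the one non-trivial ingredient, namely the derivability of the distribution $A\bullet(B_1\vee\cdots\vee B_k)\to(A\bullet B_1)\vee\cdots\vee(A\bullet B_k)$ (and its mirror), which is what makes the product well defined on ideals and what gives the two quantale laws. The paper itself does not give a proof here at all: it simply refers the reader to Brown and Gurr \cite{BrownAndGurr2}, so your write-up is essentially an unpacking of that citation rather than a different route.
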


\begin{proof} See \cite{BrownAndGurr2}.
\end{proof}

We extend this construction for polymodal case as follows:

\begin{lemma} \label{SynContr}
$ $

Let $\Sigma$ be a subexponential signature.
  \begin{enumerate}
  \item Let $s \in \Sigma$, then $!_s Syn = \{ \bigvee !_s S \: | \: A \subseteq Fm \}$ is a subquantale of $Syn$.
  \item Let $\Sigma$ be a subexponential signature, then a map $S : \Sigma \to Sub(Syn)$ such that $S(s) = !_s Syn$ is a contravariant map.
  \end{enumerate}
\end{lemma}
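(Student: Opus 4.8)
The plan is to verify the two clauses directly from the definitions, leaning on the syntactic-quantale construction $Syn = \langle \{\bigvee S \mid S \subseteq Fm\}, \subseteq, \cdot, \bigvee\{\mathds{1}\}\rangle$ from the previous lemma, and on the admissibility of cut together with the promotion rule $(\rightarrow !)$ and the structural rules attached to the index $s$.

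For the first clause, I would show $!_s Syn$ is closed under arbitrary joins and under the product of $Syn$. Closure under joins is essentially by construction: a join of sets of the form $\bigvee\, !_s S_j$ is again of that form, namely $\bigvee\, !_s\big(\bigcup_j S_j\big)$, because in $Syn$ the join is union-then-ideal-closure and $!_s(\cdot)$ distributes over this (each $!_s S$ is the set $\{B \mid B \rightarrow !_s A \text{ for some } A \in S\}$, and taking the ideal generated commutes with unions). For closure under product I would take two elements $\bigvee\, !_s S_1$ and $\bigvee\, !_s S_2$ and show their product $\bigvee\{ !_s A \bullet !_s B \mid A \in S_1, B \in S_2\}$ again lies in $!_s Syn$: the point is that $!_s A \bullet !_s B \rightarrow !_s(!_s A \bullet !_s B)$ is derivable. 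This derivability is exactly the syntactic shadow of the semantic fact in Lemma~\ref{ManyCo}/the conucleus axiom $Ia \cdot Ib = I(Ia \cdot Ib)$; syntactically one proves it using $(! \rightarrow)$ to strip the modalities on the left, then $(\rightarrow !)$ to reintroduce the outer $!_s$, the side condition $s_j \succeq s$ being trivially met since all indices involved equal $s$. Hence every product of two generators is itself a $!_s$-generated element, and closure under arbitrary products follows.

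For the second clause I must check that $S(s) = !_s Syn$ is contravariant in $s$, i.e. $s_1 \preceq s_2$ implies $!_{s_2} Syn \subseteq !_{s_1} Syn$. The key lemma is that $!_{s_2} A \rightarrow !_{s_1} A$ is derivable whenever $s_1 \preceq s_2$: apply $(! \rightarrow)$ to turn $!_{s_2}A$ into $A$, obtaining $!_{s_2}A \rightarrow A$ from the axiom $A \rightarrow A$, then apply $(\rightarrow !)$ with target index $s_1$; the side condition requires the unique antecedent index $s_2$ to satisfy $s_2 \succeq s_1$, which is our hypothesis. Given this, any element $\bigvee\, !_{s_2} S \in !_{s_2} Syn$ satisfies $\bigvee\, !_{s_2} S \subseteq \bigvee\, !_{s_1} S$ (using part 3 of the earlier lemma, $\bigvee\{A\} \subseteq \bigvee\{B\}$ iff $A \rightarrow B$, applied termwise), and the latter lies in $!_{s_1} Syn$; one should also note $\bigvee\, !_{s_1} S$ is the least element of $!_{s_1}Syn$ above $\bigvee\, !_{s_2} S$, but for mere subset inclusion the termwise bound suffices.

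The main obstacle I expect is clause~1's closure under product: one must be careful that $\bigvee\{ !_s A \bullet !_s B \mid \ldots\}$ really collapses to something of the shape $\bigvee\, !_s T$ rather than merely being contained in some $!_s$-generated ideal. The honest way is to prove the biconditional $C \in \bigvee\{ !_s A \bullet !_s B \mid A \in S_1, B \in S_2\}$ iff $C \rightarrow !_s(!_sA' \bullet !_sB')$ for suitable $A' \in S_1, B' \in S_2$, so that the set equals $\bigvee\, !_s\{ !_sA \bullet !_sB \mid A \in S_1, B \in S_2\}$; the forward direction of the needed derivability, $!_sA \bullet !_sB \rightarrow !_s(!_sA \bullet !_sB)$, is where cut-admissibility and the promotion rule do the real work, and the reverse direction $!_s(!_sA \bullet !_sB) \rightarrow !_sA \bullet !_sB$ is just dereliction $(! \rightarrow)$ followed by $(\bullet \rightarrow)$ handling. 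Everything else is bookkeeping with the ideal-closure operator.
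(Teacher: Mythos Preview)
Your treatment of part~1 is essentially the paper's approach but carried out more carefully. The paper simply asserts that $\bigvee !_s X \cdot \bigvee !_s Y = \bigvee(!_s X \cdot !_s Y) \in {!_s}Syn$ ``so far as $!_s X \cdot !_s Y \subseteq Fm$'', leaving implicit exactly the point you make explicit: one must exhibit the product as $\bigvee !_s T$ for some $T$, and the derivability of $!_s A \bullet !_s B \leftrightarrow !_s(!_s A \bullet !_s B)$ via dereliction and promotion is what does this. So on part~1 you are correct and in fact tighter than the paper.

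On part~2, however, there is a genuine gap. Contravariance of $S$ means the \emph{set-theoretic inclusion of subquantales} $!_{s_2}Syn \subseteq {!_{s_1}}Syn$: every ideal of the form $\bigvee !_{s_2}S$ must itself lie in $!_{s_1}Syn$, i.e.\ must equal $\bigvee !_{s_1}T$ for some $T \subseteq Fm$. What you prove is only the inclusion of ideals $\bigvee !_{s_2}S \subseteq \bigvee !_{s_1}S$, and your closing remark ``for mere subset inclusion the termwise bound suffices'' conflates these two different relations. Knowing that some element of $!_{s_1}Syn$ sits above $\bigvee !_{s_2}S$ does not place $\bigvee !_{s_2}S$ inside $!_{s_1}Syn$. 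The paper's own argument here is also muddled, but it at least states the correct target (``show that $\bigvee !_{s_2}X \in {!_{s_1}}Syn$''). The fix is straightforward and uses the same ingredients you already have: from $s_1 \preceq s_2$ one derives both $!_{s_2}A \rightarrow !_{s_1}!_{s_2}A$ (promotion, side condition $s_2 \succeq s_1$) and $!_{s_1}!_{s_2}A \rightarrow !_{s_2}A$ (dereliction), whence $\bigvee !_{s_2}S = \bigvee !_{s_1}(!_{s_2}S)$, and the right-hand side is visibly an element of $!_{s_1}Syn$.
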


\begin{proof}
$ $

\begin{enumerate}
    \item One need to check that $!_s Syn = \{ \bigvee !_s A \: | \: S \subseteq Fm \}$ is closed under product and sups. Let us show that $!_s Syn$ is closed under product, so far as $!_s Syn$ is closed under sups by construction.

    Let $\bigvee !_s X, \bigvee !_s Y \in !_s Syn$. Then $\bigvee !_s X \cdot \bigvee !_s Y = \bigvee (!_s X \cdot !_s Y) \in !_s Syn$, so far as $!_s X \cdot !_s Y \subseteq Fm$.

    \item Let $s_1 \preceq s_2$ and $\bigvee !_{s_2} X \in \: !_{s_2} Syn$. Let us show that $\bigvee !_{s_2} X \in !_{s_1} Syn$.

    $\begin{array}{lll}
    & \bigvee !_{s_2} X = \{ !_{s_2} B \: | \: !_{s_2} B \to A, A \in X \} = & \\
    & \{ !_{s_2} B \: | \: !_{s_2} B \to !_{s_2} A, A \in X \} \subseteq & \\
    & \{ !_{s_2} B \: | \: !_{s_2} B \to !_{s_2} A, A \in X \} \subseteq \bigvee !_{s_1} X& \\
    \end{array}$
  \end{enumerate}
\end{proof}

\begin{lemma}
  $ $

Let $\Sigma$ be a subexponential signature.

\begin{enumerate}
\item Let $S$ be a contravariant map from the previous stament and $!_s \in \Sigma$ and $A \in \mathcal{F}_{\Sigma}$, then $S(s)$ is a quantic conucleus.
\item Let $A \in \mathcal{F}_{\Sigma}$, then $\Box_s \bigvee \{ A \} = \bigvee \{ !_s A \}$, for each $s \in \mathcal{I}$
\item  Let $i, j \in \Sigma$ and $i \preceq j$, then for all $A \in \mathcal{F}_{\Sigma}$, $\Box_j (\bigvee \{ A \}) \subseteq \Box_i (\bigvee \{ A \})$.
\end{enumerate}

\end{lemma}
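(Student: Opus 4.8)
The plan is to read the phrase ``$S(s)$ is a quantic conucleus'' as the assertion that the operation $\Box_s$ on $Syn$ attached to the subquantale $S(s) = \:!_s Syn$ by~\cref{lemmaConuc}(1) — explicitly, $\Box_s a = \bigvee\{\, q \in\: !_s Syn \mid q \subseteq a\,\}$ — is well defined. With that convention part~1 is immediate: $!_s Syn$ is a subquantale of $Syn$ by~\cref{SynContr}(1), and~\cref{lemmaConuc}(1) turns any subquantale into a quantic conucleus, so there is nothing to do beyond citing these two facts.

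For part~2 I would prove the equality $\Box_s\bigvee\{A\} = \bigvee\{!_s A\}$ by two inclusions. For $\supseteq$: the ideal $\bigvee\{!_s A\}$ is a member of $!_s Syn$ (it equals $\bigvee\, !_s X$ for $X=\{A\}$), and it is contained in $\bigvee\{A\}$ because $!_s A \to A$ — derived by applying the dereliction rule $(!\to)$ to the axiom $A\to A$ — together with the earlier observation that $\bigvee\{C\}\subseteq\bigvee\{D\}$ iff $C\to D$; hence $\bigvee\{!_s A\}$ is among the elements joined to form $\Box_s\bigvee\{A\}$. For $\subseteq$: it suffices to show that every $q\in\: !_s Syn$ with $q\subseteq\bigvee\{A\}$ satisfies $q\subseteq\bigvee\{!_s A\}$. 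Writing $q=\bigvee\, !_s X$ with $X\subseteq Fm$, each generator $!_s B$ (for $B\in X$) lies in $q$, so $!_s B \to A$, and the promotion rule $\rightarrow !$ in its one-antecedent instance (side condition $s\succeq s$, which holds by reflexivity) upgrades this to $!_s B \to\: !_s A$. An arbitrary $C\in\bigvee\, !_s X$ satisfies $C\to\: !_s B_1\vee\dots\vee\: !_s B_n$ for finitely many $B_k\in X$ — the explicit description of the ideal generated by $!_s X$, namely finite $\vee$-combinations closed downward under $\to$ — and then the $\vee$-rules give $!_s B_1\vee\dots\vee\: !_s B_n\to\: !_s A$, whence $C\to\: !_s A$ by a cut; thus $C\in\bigvee\{!_s A\}$.

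For part~3 there are two short routes. Directly: when $i\preceq j$, contravariance of $S$ (\cref{SynContr}(2)) gives $S(j)=\:!_j Syn\subseteq\: !_i Syn=S(i)$, so $\Box_j\leq\Box_i$ pointwise by~\cref{lemmaConuc}(2), and in particular $\Box_j\bigvee\{A\}\subseteq\Box_i\bigvee\{A\}$. Alternatively, via part~2 the claim reduces to $\bigvee\{!_j A\}\subseteq\bigvee\{!_i A\}$, i.e.\ to $!_j A\to\: !_i A$, which follows by applying $(!\to)$ to $A\to A$ and then $\rightarrow !$ with side condition $j\succeq i$.

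The step I expect to be the main obstacle is the inclusion $\subseteq$ in part~2: one must unwind carefully which sequents witness membership in the syntactically generated ideal $\bigvee\, !_s X$, and recognise that the crucial move — promoting $!_s B\to A$ to $!_s B\to\: !_s A$ — is precisely the degenerate reflexive instance of the subexponential right rule $\rightarrow !$. Everything else is routine bookkeeping with~\cref{SynContr}, \cref{lemmaConuc}, and the elementary facts about the $Syn$ construction recalled earlier.
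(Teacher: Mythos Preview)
Your proposal is correct and follows essentially the same route as the paper: part~1 by citing that $!_s Syn$ is a subquantale and invoking~\cref{lemmaConuc}; part~2 via dereliction for one inclusion and the reflexive instance of promotion $(!_s B \to A) \Rightarrow (!_s B \to\, !_s A)$ for the other; part~3 by reducing via part~2 to $!_j A \to\, !_i A$ (your second route coincides with the paper's argument, and your first route via~\cref{lemmaConuc}(2) is a clean shortcut). Your treatment of the $\subseteq$ direction in part~2 is in fact more careful than the paper's somewhat elliptical version, since you actually track an arbitrary element of the generated ideal rather than only formulas of the shape $!_s B$.
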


\begin{proof}
$ $

\begin{enumerate}
\item Similarly to \cite{Yetter}.
\item Let $A \in Fm$ and $s \in \Sigma$.

Let $!_s B \in \Box_s \bigvee \{ A \}$, then $!_s B \rightarrow A$, then $!_s B \rightarrow !_s A$
by promotion. So, $!_s B \in \bigvee \{ !_s A \}$.

Let $C \in \bigvee \{ !_s A \}$, then $C \rightarrow !_s A$, so $!_s C \rightarrow !_s A$ by dereliction, but $!_s A \rightarrow A$, hence $!_s C \rightarrow A$ by cut. So, $!_s C \in \Box_s \bigvee \{ A \}$.
\item   Let $i, j \in I$ and $i \preceq j$, then forall $A \in \mathcal{F}_{\Sigma}$, $!_j A \rightarrow !_i A$ by promotion.
Then $\bigvee \{ !_j A \} \subseteq \bigvee \{ !_i A \}$, so $\Box_j (\bigvee \{ A \}) \subseteq \Box_i (\bigvee \{ A \})$.
\end{enumerate}
\end{proof}

\begin{lemma}
$ $

For all $A \in \mathcal{F}_{\Sigma}$,
  \begin{enumerate}
    \item Let $s \in W$, then $\Box_s \{ A \} \subseteq \{ {\bf 1} \}$
    \item Let $s \in E$, then $\Box_s (\bigvee \{ A \}) \cdot \bigvee \{ B \} = \bigvee \{ B\} \cdot \Box_s (\bigvee \{ A \})$
    \item Let $s \in C$, then $(\Box_s \bigvee A \cdot \bigvee B) \cup (\bigvee B \cdot \Box_s \bigvee A) \subseteq \Box_s \bigvee A \cdot \bigvee B \cdot \Box_s \bigvee A$, for all $B \subseteq Fm$.
  \end{enumerate}
\end{lemma}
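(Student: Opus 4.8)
The plan is to turn each of the three clauses into one derivability fact in $SMALC_\Sigma$ produced by the corresponding structural rule, and then to push that fact through the ideal construction. Two auxiliary observations about $Syn$ will be used throughout. First, by the preceding lemma $\Box_s \bigvee\{A\} = \bigvee\{!_s A\}$, and $\bigvee\{C\} \subseteq \bigvee\{D\}$ iff $C \to D$ is derivable. Second, for principal ideals the product has the simple shape $\bigvee\{X\} \cdot \bigvee\{Y\} = \bigvee\{X \bullet Y\}$: $\supseteq$ holds because $X \bullet Y$ is one of the generators of the product, and $\subseteq$ holds because $X' \to X$ and $Y' \to Y$ give $X' \bullet Y' \to X \bullet Y$ by $\rightarrow \bullet$ and then $\bullet \rightarrow$. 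Combined with the quantale law that $\cdot$ distributes over arbitrary joins and with the identity $\bigvee B = \bigvee_{D \in B} \bigvee\{D\}$, this reduces every product appearing in the statement to the single-formula level.

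For clause (1) I would use that $s \in W$ makes the weakening rule available: applied to the axiom $\rightarrow \mathbf{1}$ it gives $!_s A \to \mathbf{1}$, and then the cut rule turns any $C \to !_s A$ into $C \to \mathbf{1}$, so $\Box_s \bigvee\{A\} = \bigvee\{!_s A\} \subseteq \bigvee\{\mathbf{1}\}$. For clause (2) I would use that $s \in E$ makes $\mathbf{ex}_1$ and $\mathbf{ex}_2$ available and derive both $!_s A \bullet B \to B \bullet !_s A$ and $B \bullet !_s A \to !_s A \bullet B$: for the first, $\rightarrow \bullet$ gives $B, !_s A \to B \bullet !_s A$, then $\mathbf{ex}_1$ gives $!_s A, B \to B \bullet !_s A$, then $\bullet \rightarrow$; the second is symmetric with $\mathbf{ex}_2$. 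By the product formula this is exactly $\Box_s \bigvee\{A\} \cdot \bigvee\{B\} = \bigvee\{!_s A \bullet B\} = \bigvee\{B \bullet !_s A\} = \bigvee\{B\} \cdot \Box_s \bigvee\{A\}$.

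For clause (3) I would start from the axioms $!_s A \to !_s A$, $D \to D$, $!_s A \to !_s A$, apply $\rightarrow \bullet$ twice to obtain $!_s A, D, !_s A \to !_s A \bullet D \bullet !_s A$, and then collapse the two copies of $!_s A$ using the non-local contraction rules available for $s \in C$: $\mathbf{ncontr}_1$ keeps the left copy and yields $!_s A, D \to !_s A \bullet D \bullet !_s A$, while $\mathbf{ncontr}_2$ keeps the right copy and yields $D, !_s A \to !_s A \bullet D \bullet !_s A$; one application of $\bullet \rightarrow$ in each case gives $!_s A \bullet D \to !_s A \bullet D \bullet !_s A$ and $D \bullet !_s A \to !_s A \bullet D \bullet !_s A$. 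Read as inclusions of principal ideals, these give $\bigvee\{!_s A \bullet D\} \subseteq \bigvee\{!_s A \bullet D \bullet !_s A\}$ and symmetrically for every $D$; taking joins over $D \in B$ and distributing $\cdot$ over joins produces the two claimed inclusions.

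I expect the only real obstacle to be bookkeeping, not mathematics: one has to be careful that $\bigvee B$ for a set $B$ is the generated ideal, which is the lattice-join of the principal ideals $\bigvee\{D\}$, and that products of ideals are genuinely controlled by single-formula derivability. Once the product formula $\bigvee\{X\} \cdot \bigvee\{Y\} = \bigvee\{X \bullet Y\}$ and the join-distributivity of $\cdot$ are established, each clause is a verbatim reading of the corresponding structural rule of $SMALC_\Sigma$ — weakening for (1), exchange for (2), non-local contraction for (3).
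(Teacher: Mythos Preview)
Your proposal is correct and follows essentially the same route as the paper: each clause is reduced to a single derivability fact in $SMALC_\Sigma$ --- $!_s A \to {\bf 1}$ for (1), $!_s A \bullet B \leftrightarrow B \bullet !_s A$ for (2), and $!_s A \bullet B \to !_s A \bullet B \bullet !_s A$ together with its mirror for (3) --- and then lifted to the ideal level. You simply spell out the derivations and the passage through $\bigvee\{X\}\cdot\bigvee\{Y\}=\bigvee\{X\bullet Y\}$ and join-distributivity in more detail than the paper does.
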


\begin{proof}
$ $

\begin{enumerate}
  \item Follows from $!_s A \rightarrow {\bf 1}$, since $s \in W$.
  \item Follows from $!_s A \bullet B \leftrightarrow B \bullet !_s A$.
  \item Follows from $SMALC_{\Sigma} \vdash !_s A \bullet B \rightarrow !_s A \bullet B \bullet !_s A$ and $SMALC_{\Sigma}
  \vdash B \bullet !_s A  \rightarrow !_s A \bullet B \bullet !_s A$.
\end{enumerate}
\end{proof}

\begin{defin}
$ $

Let $\mathcal{Q}$ be a syntactic quantale defined above and $\Sigma = \langle I, \preceq,
\mathcal{W}, \mathcal{C}, \mathcal{E} \rangle$ be a subexponential signature and $S : \Sigma \to Sub(Syn)$ be a contravariant map in sense of~\cref{SynContr}.

We define a map $\Box : \Sigma \to CN(Syn)$ as follows:

$\Box(s)(\bigvee \{ A \} ) = \{ !_s B \in S(Syn) \: | \: !_s B \rightarrow A \}$.
\end{defin}

\begin{lemma} $\Box$ is a subexponential interpretation.
\end{lemma}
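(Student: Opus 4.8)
The plan is to show that $\Box$ is literally the subexponential interpretation $\sigma$ attached to the contravariant map $S : \Sigma \to Sub(Syn)$ with $S(s) = {!_s}Syn$ exhibited in \cref{SynContr}. Once $\Box = \sigma$ is established, $\Box$ is a subexponential interpretation by definition, and the fact that each $\Box(s)$ is a genuine quantic conucleus on $Syn$ follows from \cref{Con1} and \cref{Con2} applied to the subquantale $S(s)$. So the whole task reduces to checking, for every index $s$, the pointwise identity $\Box(s)(\mathcal{J}) = \bigvee\{\,q \in {!_s}Syn \mid q \subseteq \mathcal{J},\ [\text{the condition imposed by the membership of }s\text{ in }\mathcal{W},\mathcal{C},\mathcal{E}]\,\}$ for every ideal $\mathcal{J} \in Syn$.

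First I would treat the unrestricted part, i.e. that $\Box(s)(\mathcal{J})$ is the greatest element of the subquantale ${!_s}Syn$ below $\mathcal{J}$. On a principal ideal this is exactly the earlier computation $\Box_s(\bigvee\{A\}) = \bigvee\{!_s A\}$: dereliction ${!_s}A \rightarrow A$ together with cut gives $\bigvee\{!_s A\} \subseteq \bigvee\{A\}$, and $\bigvee\{!_s A\} \in {!_s}Syn$; conversely, if $q = \bigvee{!_s}X \in {!_s}Syn$ and $q \subseteq \bigvee\{A\}$, then each generator $!_s B$ of $q$ satisfies $!_s B \rightarrow A$, hence $!_s B \rightarrow {!_s}A$ by promotion (the side condition $s \succeq s$ being trivial), so $q \subseteq \bigvee\{!_s A\}$. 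Thus $\bigvee\{!_s A\}$ is greatest, matching the formula. Since in $Syn$ every ideal is the join of the principal ideals it contains, and quantale multiplication distributes over joins, this principal-ideal identity propagates to all of $Syn$ and agrees with the subquantale-conucleus of \cref{lemmaConuc}; the same observation shows the map is well defined, because $\bigvee\{A\} = \bigvee\{A'\}$ forces $A \leftrightarrow A'$ and hence equal images.

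Next I would match the restricted clauses using the syntactic lemmas proved just above this statement. If $s \in \mathcal{W}$, then $\Box_s\bigvee\{A\} \subseteq \bigvee\{{\bf 1}\} = \varepsilon$, and since arbitrary joins and downward closures of elements $\leq \varepsilon$ stay $\leq \varepsilon$, \emph{every} element of ${!_s}Syn$ lies below the unit, so adjoining the condition ``$q \leq \mathds{1}$'' does not change the join — this is the unital clause. If $s \in \mathcal{E}$, the lemma gives $\Box_s\bigvee\{A\} \cdot \bigvee\{B\} = \bigvee\{B\} \cdot \Box_s\bigvee\{A\}$ for all $B$; by distributivity of $\cdot$ over joins this centrality passes to every element of ${!_s}Syn$ against every ideal, so we get the central clause. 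If $s \in \mathcal{C}$, the lemma gives $(\Box_s\bigvee A \cdot \bigvee B) \cup (\bigvee B \cdot \Box_s\bigvee A) \subseteq \Box_s\bigvee A \cdot \bigvee B \cdot \Box_s\bigvee A$, which, again by distributivity over joins as in \cref{ssiSub}, upgrades to every element of ${!_s}Syn$ being strongly square increasing — the strongly square increasing clause. For an index in an intersection of $\mathcal{W},\mathcal{C},\mathcal{E}$ all the relevant syntactic facts hold simultaneously, so $\Box(s)$ matches the combined clause (and is in particular central when $s \in \mathcal{W}\cap\mathcal{C}$, in agreement with \cref{Con2} and \cref{WellDef}). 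Assembling the cases yields $\Box = \sigma$.

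I expect the main obstacle to be bookkeeping rather than conceptual difficulty: $\Box$ is specified only on principal ideals $\bigvee\{A\}$, whereas a subexponential interpretation must be an honest quantic conucleus on the \emph{whole} quantale $Syn$, so one has to argue carefully that the extension ``by joins'' is forced, coincides with the subquantale-conucleus of \cref{lemmaConuc}, is independent of the chosen generating set, and that the quoted syntactic lemmas (stated for principal ideals and for a single right factor $\bigvee\{B\}$) genuinely pin down the order/centrality/idempotence conditions on \emph{all} of ${!_s}Syn$ against \emph{all} ideals. Everything past that is a direct appeal to promotion, dereliction, cut, and the already-established \cref{Con1,Con2,WellDef}.
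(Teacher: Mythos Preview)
Your proposal is correct and follows the same route as the paper: the paper's proof is the single line ``Follows from lemmas above,'' and the lemmas it refers to are precisely the ones you invoke (\cref{SynContr}, the computation $\Box_s\bigvee\{A\}=\bigvee\{!_sA\}$, and the three syntactic facts for $s\in\mathcal{W},\mathcal{E},\mathcal{C}$). You have simply unpacked that one line, and in particular you flag the one point the paper leaves implicit---that $\Box$ is literally defined only on principal ideals and must be extended to all of $Syn$ before it can be compared with $\sigma$; your join-extension argument and the observation that the extra conditions in the clauses of $\sigma$ are vacuous on $!_sSyn$ are exactly what is needed to close that gap.
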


\begin{proof}
  Follows from lemmas above.
\end{proof}

\begin{lemma}
$ $

  Let $Q$ be a quantale constructed above, then there exist a model $\langle Q, [\![.]\!] \rangle$, such that $[\![A]\!] = \bigvee \{ A \}$, $A \in Fm$.
\end{lemma}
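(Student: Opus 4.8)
The plan is to produce the valuation explicitly and then verify the ``truth lemma'' $[\![A]\!] = \bigvee\{A\}$ by induction on the complexity of $A$, reading off each inductive case from the syntactic descriptions of the quantale operations obtained above. Concretely, I would take $f(p_i) = \bigvee\{p_i\}$ on propositional variables and let the subexponential interpretation be the map $\Box$ of the preceding definition, which is a subexponential interpretation by the lemma just proved; it then suffices to show $[\![A]\!] = \bigvee\{A\}$ for every $A \in Fm$. The base case $A = p_i$ is the definition of $f$, and $A = \mathbf{1}$ holds because the unit $e$ of the syntactic quantale is exactly $\bigvee\{\mathbf{1}\}$. Throughout the induction I would use the two basic facts established earlier, namely $\bigvee\{A\} = \{\,B \mid B \to A\,\}$ and $\bigvee\{A\} \subseteq \bigvee\{B\} \iff A \to B$, which reduce every identity to be checked to provability in $SMALC_\Sigma$.

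For the multiplicative cases I would argue as follows. For $A \bullet B$, the product formula gives $[\![A\bullet B]\!] = \bigvee\{A\}\cdot\bigvee\{B\} = \bigvee\{\,A'\bullet B' \mid A' \to A,\ B' \to B\,\}$; since $A' \to A$ and $B' \to B$ yield $A' \bullet B' \to A \bullet B$ by $\rightarrow\bullet$ and $\mathbf{cut}$, while $A \bullet B \to A\bullet B$ puts $A\bullet B$ itself into the set, the two ideals coincide. For $A\backslash B$, I would show that $\bigvee\{A\backslash B\}$ is the greatest ideal $\mathcal{C}$ with $\bigvee\{A\}\cdot\mathcal{C} \subseteq \bigvee\{B\}$, so that it equals $[\![A]\!]\backslash[\![B]\!] = \bigvee\bigl\{\,\mathcal{C} \mid \bigvee\{A\}\cdot\mathcal{C} \subseteq \bigvee\{B\}\,\bigr\}$: the inclusion $\bigvee\{A\}\cdot\bigvee\{A\backslash B\} \subseteq \bigvee\{B\}$ follows from $\vdash A\bullet(A\backslash B) \to B$ (derived from the axioms by $\backslash\rightarrow$ and $\bullet\rightarrow$) together with $\rightarrow\bullet$ and $\mathbf{cut}$; conversely, if $\bigvee\{A\}\cdot\mathcal{C} \subseteq \bigvee\{B\}$ then for $C' \in \mathcal{C}$ we get $A\bullet C' \to B$, hence $A, C' \to B$ (by $\rightarrow\bullet$ and $\mathbf{cut}$) and then $C' \to A\backslash B$ by $\rightarrow\backslash$, so $\mathcal{C} \subseteq \bigvee\{A\backslash B\}$. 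The case $B/A$ is symmetric.

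For the additive cases I would use that in the syntactic quantale the meet is intersection of ideals and the join is the least ideal containing the union: $C' \in \bigvee\{A\}\cap\bigvee\{B\}$ iff $C' \to A$ and $C' \to B$ iff $C' \to A\,\&\,B$ (by the right and left rules for $\&$, the latter with the axioms and $\mathbf{cut}$), so $[\![A\,\&\,B]\!] = \bigvee\{A\,\&\,B\}$; and $\bigvee\{A\vee B\}$ is an ideal containing both $\bigvee\{A\}$ and $\bigvee\{B\}$ (since $A,B \to A\vee B$ by $\rightarrow\vee$) while every ideal containing $A$ and $B$ contains $A\vee B$, so $\bigvee\{A\vee B\}$ is their join and $[\![A\vee B]\!] = \bigvee\{A\vee B\}$. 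Finally, for $!_s A$ we have $[\![!_s A]\!] = \Box(s)\bigl(\bigvee\{A\}\bigr) = \bigvee\{!_s A\}$ directly by the lemma computing $\Box_s\bigvee\{A\}$.

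The step that will need the most care is the pair of division cases, where one must check that the residual, which in the quantale is a supremum over an abstractly specified family of ideals, actually collapses to the principal ideal of $A\backslash B$ (resp. $B/A$); this forces one to use both directions of residuation together with the derivability of $A\bullet(A\backslash B)\to B$, and one should note that the antecedents appearing in the applications of $\rightarrow\backslash$ and $\rightarrow/$ are single formulas, so the non-emptiness side conditions are harmless (and in $\mathbf{L}_{\mathbf{1}}$ immaterial). Once the induction is complete, $\langle Q, [\![\cdot]\!]\rangle$ is the required model; and from it completeness of $SMALC_\Sigma$ with respect to quantales with quantic conuclei follows, since if $[\![\Gamma]\!] \le [\![A]\!]$ with $\Gamma = B_1,\dots,B_n$ then $\bigvee\{B_1\bullet\cdots\bullet B_n\} \subseteq \bigvee\{A\}$, whence $B_1\bullet\cdots\bullet B_n \to A$ and then $\Gamma \to A$ by $\bullet\rightarrow$.
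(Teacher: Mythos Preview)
Your proposal is correct and follows the same route as the paper: define the valuation by $f(p_i)=\bigvee\{p_i\}$, take $\Box$ as the subexponential interpretation, and verify $[\![A]\!]=\bigvee\{A\}$ by induction on $A$, case by case over the connectives. The paper's own proof merely lists the eight cases without spelling out the verifications, so your version is in fact more detailed than what appears there; the only minor remark is that your concluding paragraph about deriving completeness properly belongs to the next theorem rather than to this lemma.
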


\begin{proof}
$ $

  We define an interpreation as follows:

\begin{enumerate}
  \item $[\![p_i]\!] = \bigvee \{ p_i \}$
  \item $[\![{\bf 1}]\!] = \bigvee \{ {\bf 1} \}$
  \item $[\![A \bullet B]\!] = \bigvee \{ A \bullet B \}$
  \item $[\![A / B]\!] = \bigvee \{ A / B \}$
  \item $[\![B \setminus A]\!] = \bigvee \{ B \setminus A \}$
  \item $[\![A \& B ]\!] = \bigvee \{ A \& B \}$
  \item $[\![A \lor B]\!] = \bigvee \{ A \lor B\}$
  \item $[\![!_s A]\!] = \Box(s) (\bigvee \{ A \}) = \{ !_s B \in S(s) \: | !_s B \rightarrow A \} = \{ !_s B \in S(s) \: | !_s B \rightarrow A \} = \bigvee \{ !_s A \}$.
\end{enumerate}
\end{proof}

\begin{theorem}
  $\Gamma \models A \Rightarrow \Gamma \rightarrow A$.
\end{theorem}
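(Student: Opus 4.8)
The plan is to derive completeness by instantiating the semantic consequence relation at the canonical syntactic model built in the lemmas immediately above. Those lemmas have already supplied a concrete unital quantale $\mathcal{Q}$ whose elements are the ideals $\bigvee S$ of $Fm$ ordered by $\subseteq$; the valuation $f(p_i) = \bigvee\{p_i\}$; and a map $\Box : \Sigma \to CN(\mathcal{Q})$ which, by the lemma asserting that $\Box$ is a subexponential interpretation, is an admissible interpretation of the subexponential signature (the indices lying in intersections of $\mathcal{W}, \mathcal{C}, \mathcal{E}$ being handled by combining the corresponding conucleus conditions, as in~\cref{Con1} and~\cref{Con2}). The last lemma shows the induced interpretation satisfies $[\![A]\!] = \bigvee\{A\} = \{\, B \mid SMALC_\Sigma \vdash B \to A \,\}$ for every formula $A$ --- in particular the modal clause, because $\Box(s)(\bigvee\{A\}) = \bigvee\{!_s A\}$. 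So $\langle \mathcal{Q}, [\![\cdot]\!]\rangle$ is one of the models over which $\models$ quantifies.

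Now suppose $\Gamma \models A$, say $\Gamma = C_1, \dots, C_n$. Applying the definition of $\models$ to this particular model gives $[\![C_1]\!] \cdot \cdots \cdot [\![C_n]\!] \subseteq [\![A]\!]$. The next step is to evaluate the left-hand side. From the product formula $\bigvee\mathcal{A} \cdot \bigvee\mathcal{B} = \bigvee\{\, X \bullet Y \mid X \in \mathcal{A},\ Y \in \mathcal{B} \,\}$, together with $\vdash X \to X$, monotonicity of $\bullet$ (so $X \to A$ and $Y \to B$ yield $X \bullet Y \to A \bullet B$), and the fact that $\bigvee\{A \bullet B\}$ is an ideal, one obtains $\bigvee\{A\} \cdot \bigvee\{B\} = \bigvee\{A \bullet B\}$; iterating this with associativity of $\bullet$ gives $[\![\Gamma]\!] = \bigvee\{C_1 \bullet \cdots \bullet C_n\}$ (and for empty $\Gamma$, $[\![\Gamma]\!] = \varepsilon = \bigvee\{{\bf 1}\}$). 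Since $[\![A]\!] = \bigvee\{A\}$, the inclusion reads $\bigvee\{C_1 \bullet \cdots \bullet C_n\} \subseteq \bigvee\{A\}$, which by the lemma stating $\bigvee\{A\} \subseteq \bigvee\{B\} \iff A \to B$ is precisely $SMALC_\Sigma \vdash C_1 \bullet \cdots \bullet C_n \to A$.

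Finally I would pull this back into sequent form: applying $\bullet \rightarrow$ $n-1$ times --- equivalently, cutting against the provable sequent $C_1, \dots, C_n \to C_1 \bullet \cdots \bullet C_n$ --- yields $C_1, \dots, C_n \to A$, i.e.\ $\Gamma \to A$; in the empty case, ${\bf 1} \to A$ together with $\to {\bf 1}$ and cut gives $\to A$. I do not expect a genuine obstacle here: the substantive work --- that $\Box$ is a well-defined subexponential interpretation and that the canonical interpretation of every connective, the subexponentials included, collapses to $\bigvee\{A\}$ --- has already been carried out in the preceding lemmas, so what remains is the bookkeeping of the product-of-principal-ideals identity and the folding via $\bullet \rightarrow$. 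The one point worth double-checking is that the canonical $\Box$ meets the definition of a subexponential interpretation on indices belonging to several of $\mathcal{W}, \mathcal{C}, \mathcal{E}$ simultaneously, but this is exactly what the earlier ``$\Box$ is a subexponential interpretation'' lemma records.
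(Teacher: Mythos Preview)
Your proposal is correct and follows exactly the route the paper intends: the paper's own proof is the single line ``Follows from lemmas above,'' and what you have written is precisely the unpacking of that line via the canonical syntactic quantale, the lemma $[\![A]\!] = \bigvee\{A\}$, the equivalence $\bigvee\{A\} \subseteq \bigvee\{B\} \iff A \to B$, and the $\bullet \to$ rule. If anything you are slightly more careful than necessary in re-deriving $\bigvee\{A\}\cdot\bigvee\{B\} = \bigvee\{A\bullet B\}$, since this is immediate from the definition of the product on the syntactic quantale with $\mathcal{A}=\{A\}$, $\mathcal{B}=\{B\}$.
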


\begin{proof}
  Follows from lemmas above.
\end{proof}

\section{Representation theorem and relational completeness}

A relational quantale was initially introduced by Brown and Gurr \cite{BrownAndGurr1}.
Moreover, there is a representation in the same paper which claims that any quantale is isomorphic on its underlying set.

\begin{defin}
  $ $

  Let $A$ be a set and $R \subseteq A \times A$ a transitive relation on $A$.
  Then relational quantale on $A$ is a triple $\mathcal{Q} = \langle \mathcal{P}(R), \bigvee, \mathcal{I}
  \rangle$ with the following data:

  \begin{enumerate}
    \item $\langle \mathcal{P}(R), \bigvee, \subseteq \rangle$ is a complete semi-lattice
    \item Multiplication is defined as
      $R \circ S = \{ \langle a, c \rangle \: | \: \exists b \in A, \langle a, b \rangle \in R \text{ and } \langle b, c \rangle \in S\}$
    \item $\langle \mathcal{A}, \circ, \mathcal{I} \rangle$ is a monoid, that is, $\mathcal{I}$ is identity
    \item For each indexing set $J$, $R \circ \bigvee_{j \in J} S_j = \bigvee_{j \in J} (R \circ S_j)$ and
      $\bigvee_{j \in J} R_j \circ S = \bigvee_{j \in J}(R_j \circ S)$.
  \end{enumerate}
\end{defin}

\begin{lemma} \label{Representation lemma}
  $ $

  Let $\mathcal{Q} = \langle A, \leq, \cdot, \bigvee \rangle$ be a unital quantale and $\mathcal{S}$ is a subquantale of $\mathcal{Q}$.

  Then $\langle \mathcal{Q}, I_{\mathcal{S}} \rangle$ is isomorphic to some relational quantale of $A$
  with some quantic conucleus $\hat{I}$, where $I_{\mathcal{S}}$ is a quantic conclues on $\mathcal{S}$.
\end{lemma}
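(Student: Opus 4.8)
The plan is to reduce the statement to the Brown--Gurr representation of quantales \cite{BrownAndGurr1} and then transport the conucleus along the resulting isomorphism; almost all the content is in the cited representation, and the conucleus part is a short formal check.

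First I would apply the Brown--Gurr theorem to the unital quantale $\mathcal{Q}=\langle A,\le,\cdot,\bigvee\rangle$: it yields a relational quantale $\mathcal{R}=\langle\mathcal{P}(R),\bigvee,\mathcal{I}\rangle$ on the set $A$ (for a suitable transitive $R$, with composition $\circ$ as product and union of relations as join) together with a quantale isomorphism $\phi\colon\mathcal{Q}\to\mathcal{R}$. By the definition of a quantale homomorphism, $\phi$ preserves $\cdot$ and \emph{arbitrary} joins; since the monoid identity is unique, $\phi(\varepsilon)=\mathcal{I}$; and being a bijective homomorphism, $\phi$ is order-reflecting, i.e. $q\le a\iff\phi(q)\subseteq\phi(a)$.

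Next I would transport the subquantale and its conucleus. By~\cref{prop1}, $\phi(\mathcal{S})$ is a subquantale of $\mathcal{R}$, so by~\cref{lemmaConuc} it induces a quantic conucleus $\hat I$ on $\mathcal{R}$, namely $\hat I(X)=\bigvee\{\,Y\in\phi(\mathcal{S}) : Y\subseteq X\,\}$. It then remains to check that $\phi$ carries $I_{\mathcal{S}}$ to $\hat I$, i.e. $\phi(I_{\mathcal{S}}a)=\hat I(\phi a)$ for all $a\in A$, which is the short computation
\begin{align*}
\phi(I_{\mathcal{S}}a) &= \phi\Big(\bigvee\{q\in\mathcal{S} : q\le a\}\Big) = \bigvee\{\phi(q) : q\in\mathcal{S},\ q\le a\}\\
&= \bigvee\{Y\in\phi(\mathcal{S}) : Y\subseteq\phi(a)\} = \hat I(\phi a),
\end{align*}
using, in order, the description of $I_{\mathcal{S}}$ from~\cref{lemmaConuc}, that $\phi$ preserves arbitrary joins, and that $\phi$ restricts to an order isomorphism $\mathcal{S}\cong\phi(\mathcal{S})$. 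If $\varepsilon\in\mathcal{S}$ (so that $I_{\mathcal{S}}$ satisfies the unital normalisation), then evaluating the displayed identity at $a=\varepsilon$ gives $\hat I(\mathcal{I})=\mathcal{I}$, so $\hat I$ is a unital quantic conucleus as well. Hence $\phi$ is the desired isomorphism $\langle\mathcal{Q},I_{\mathcal{S}}\rangle\cong\langle\mathcal{R},\hat I\rangle$.

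I do not expect a genuine obstacle here: the one substantial ingredient---representing $\mathcal{Q}$ itself as a relational quantale on $A$---is exactly the cited Brown--Gurr result, and everything else is transport of structure along an isomorphism. The only point that needs a little care is that the join of $\mathcal{R}$ is literally union of relations, so that the $\hat I$ obtained from the subquantale $\phi(\mathcal{S})$ via~\cref{lemmaConuc} is genuinely the operator we want, and the verification above (which implicitly uses~\cref{ConucUnfold} to identify $\hat I$ with that subquantale conucleus) goes through unchanged. Equivalently one could simply \emph{define} $\hat I:=\phi\circ I_{\mathcal{S}}\circ\phi^{-1}$, note it is a quantic conucleus of the same kind by conjugation, and then invoke~\cref{ConucUnfold} to see that it coincides with the subquantale conucleus of $\phi(\mathcal{S})$; the two routes produce the same $\hat I$.
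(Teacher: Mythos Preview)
Your proposal is correct and follows essentially the same approach as the paper: invoke the Brown--Gurr representation to get an isomorphism $\phi$ (the paper uses the explicit $a\mapsto\hat a=\{\langle b,c\rangle: b\le a\cdot c\}$), push $\mathcal{S}$ forward via~\cref{prop1}, define $\hat I$ from the image subquantale via~\cref{lemmaConuc}, and verify $\phi(I_{\mathcal S}a)=\hat I(\phi a)$ by the same join-preservation computation. The only caveat is your side remark that ``the join of $\mathcal R$ is literally union of relations'': in the Brown--Gurr image $\{\hat a:a\in\mathcal Q\}$ the join is the transported one (it happens to coincide with union on families of the form $\{\hat a_j\}$, but the carrier is not all of $\mathcal P(R)$), and in any case your displayed computation only uses that $\phi$ preserves arbitrary joins, so this does not affect the argument.
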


\begin{proof}
$ $

  We are piggybacking on the construction provided by Brown and Gurr \cite{BrownAndGurr1}.

  This quantale is 4-tuple $\hat{\mathcal{Q}} = \langle \mathcal{R}, \subseteq, \circ, \bigvee \rangle$ defined as follows, where
  $\mathcal{R} = \{ \hat{a} \: | \: a \in \mathcal{Q}\}$ and $\hat{a} = \{ \langle b, c \rangle \: | \: b \leq a \cdot c \}$.
  One may prove that:
  \begin{enumerate}
    \item $\widehat{a \cdot b} = \hat{a} \circ \hat{b}$
    \item $\widehat{\bigvee \limits_{j \in J} a_j} = \bigvee \limits_{j \in J} \hat{a_j}$
  \end{enumerate}

  If $\mathcal{Q}$ is an unital quantale, then we express the neutral element as $\hat{\varepsilon} = \{ \langle b, c \rangle \: | \: b \cdot \varepsilon \leq c \} = \{ \langle b, c \rangle \: | \: b \leq c \}$.

  Also, one may prove that $\hat{a} \subseteq \hat{b}$ iff $a \leq b$.

  Let $\mathcal{S} \subseteq \mathcal{Q}$, so $I_{\mathcal{S}} a := \bigvee \{ s \: | \: s \in S, s \leq a \}$ is quantic conucleus. So, $\hat{\mathcal{S}} \subseteq \hat{\mathcal{Q}}$ is a subquantale of $\hat{\mathcal{Q}}$ by~\cref{prop1}.

  Let us define $\hat{I} \hat{a} := \bigvee \{ \hat{s} \: | \: \hat{s}  \in \hat{S}, \hat{s} \subseteq \hat{a} \}$, then

  $\begin{array}{lll}
  &\widehat{(I_{\mathcal{S}} a)} = \{ \langle p, q \rangle \: | \: p \leq I_{\mathcal{S}} a \cdot q \} = &\\
  &\:\:\:\: \text{By the condition}& \\
  &\{ \langle p, q \rangle \: | \: p \leq \bigvee \{ s \: | \: s \in \mathcal{S}, s \leq a \} \cdot q \} = &\\
  &\:\:\:\: \text{Homomorphism}& \\
  &\widehat{(\bigvee_{s \in S, s \leq a} s)} = & \\
  &\:\:\:\: \text{Homomorphism preserves sups}& \\
  &\bigvee_{s \in S, s \leq a} \hat{s} = & \\
  &\:\:\:\: \text{Unfolding}& \\
  &\bigvee \{ \hat{s} \: | \: s \in S, s \leq a \} = & \\
  &\:\:\:\: \text{Unfolding}& \\
  &\bigvee \{ \hat{s} \: | \: \hat{s} \in \hat{\mathcal{S}}, \hat{s} \subseteq \hat{a} \} = \widehat{I_{\mathcal{S}}} \hat{a}& \\
  \end{array}$

Thus, $\widehat{I_{s}} \hat{a} = \widehat{I_{\mathcal{S}} a}$.

\end{proof}

The representation theorem for quantales for quantic conuclei follows immediately from the lemma above.

\begin{theorem}
  Let $\mathcal{Q} = \langle \mathcal{Q}, I \rangle$ be a quantale with
  quantic conucleus. Then $\mathcal{Q}$ is isomorphic to relational quantale on $Q$ with some conucleus operator.
\end{theorem}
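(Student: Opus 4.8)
The plan is to reduce the statement to \cref{Representation lemma} by exhibiting the given conucleus $I$ as the operator $I_{\mathcal{S}}$ associated with a suitable subquantale $\mathcal{S}$. The natural candidate is the subquantale of open elements $\mathcal{Q}_I = \{ a \in \mathcal{Q} \mid I a = a \}$, which is a subquantale of $\mathcal{Q}$ (as recalled just before \cref{ConucUnfold}).

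First I would invoke \cref{ConucUnfold}, which gives $I a = \bigvee \{ q \in \mathcal{Q}_I \mid q \leq a \}$ for every $a \in \mathcal{Q}$. This is precisely the defining formula for $I_{\mathcal{Q}_I}$ in the sense of \cref{lemmaConuc}(1) and of the representation lemma, so $I = I_{\mathcal{Q}_I}$ on the nose, and $\langle \mathcal{Q}, I \rangle$ and $\langle \mathcal{Q}, I_{\mathcal{Q}_I} \rangle$ are literally the same structure. Then I would apply \cref{Representation lemma} with $\mathcal{S} := \mathcal{Q}_I$: it produces a relational quantale $\hat{\mathcal{Q}} = \langle \mathcal{R}, \subseteq, \circ, \bigvee \rangle$ on the underlying set $Q$, together with a conucleus $\hat{I}$, and shows that $a \mapsto \hat{a}$ is a quantale isomorphism $\mathcal{Q} \to \hat{\mathcal{Q}}$ satisfying $\widehat{I_{\mathcal{S}} a} = \hat{I} \hat{a}$. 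Combining this with the identification $I = I_{\mathcal{Q}_I}$ from the previous step, the map $a \mapsto \hat{a}$ is an isomorphism of quantales-with-conucleus $\langle \mathcal{Q}, I \rangle \to \langle \hat{\mathcal{Q}}, \hat{I} \rangle$, which is exactly what the theorem asserts.

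The step that requires the most care --- and really the only place an obstruction could hide --- is checking that the hypotheses of \cref{Representation lemma} are genuinely met. That lemma is stated for a \emph{unital} quantale, so I would either phrase the theorem for unital quantales (consistent with the convention that $I \varepsilon = \varepsilon$ is part of the definition of a conucleus here), or else verify that the Brown--Gurr construction $\hat{a} = \{ \langle b, c \rangle \mid b \leq a \cdot c \}$, together with the identities $\widehat{a \cdot b} = \hat{a} \circ \hat{b}$, $\widehat{\bigvee_{j} a_j} = \bigvee_{j} \hat{a_j}$, and the equivalence $\hat{a} \subseteq \hat{b} \iff a \leq b$, all go through without a unit, reading ``relational quantale'' over a transitive (not necessarily reflexive) relation. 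Apart from that bookkeeping, everything is a direct citation of the lemmas already established, so I expect the proof to be a short assembly rather than a new argument.
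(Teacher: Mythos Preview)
Your proposal is correct and follows exactly the paper's own approach: invoke \cref{ConucUnfold} to identify $I$ with $I_{\mathcal{Q}_I}$ for the subquantale $\mathcal{Q}_I$ of open elements, then apply \cref{Representation lemma} with $\mathcal{S} = \mathcal{Q}_I$. Your discussion of the unitality hypothesis is more careful than the paper's proof, which simply assumes it without comment.
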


\begin{proof}
  Let $I$ be a quantic conucleus. By Lemma, $I a = \bigvee \{ q \in \mathcal{Q}_I \: | \: q \leq a \}$.
  One need to apply the previous statement with conucleus $I' a = \bigvee \{ q \in \mathcal{Q}_I \: | \: q \leq a \}$.
\end{proof}

This representation theorem might be extended categorically.
Let ${\bf Con Quant}$ be a category of (unital) quantales with quantic conuclei with homomorphisms between them.

${\bf R Con Quant}$ is a category of relational quantales with conuclei operators.
It is clear that any relational quantale is a (unital) quantale. Thus, ${\bf Con RQuant}$ is a subcategory of ${\bf Con Quant}$.

Let us show that ${\bf Con Quant}$ is equivalent to ${\bf Con RQuant}$.

\begin{theorem}
  There exists a functor $\mathfrak{F} : {\bf Con Quant} \to {\bf Con RQuant}$
\end{theorem}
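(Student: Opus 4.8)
The plan is to construct the functor $\mathfrak{F}$ by sending each object $\langle \mathcal{Q}, I \rangle$ to the relational quantale $\langle \hat{\mathcal{Q}}, \hat{I} \rangle$ produced in~\cref{Representation lemma}, and to send each morphism to the map induced on the hatted sets. Concretely, for a quantale $\mathcal{Q}$ with underlying set $Q$, recall that $\hat{a} = \{ \langle b, c \rangle \mid b \leq a \cdot c \}$ and $\mathcal{R} = \{ \hat{a} \mid a \in \mathcal{Q} \}$, with $\hat{I}$ defined by $\hat{I}\hat{a} = \bigvee \{ \hat{s} \mid \hat{s} \in \widehat{\mathcal{Q}_I}, \hat{s} \subseteq \hat{a} \}$, where $\mathcal{Q}_I = \{ a \in \mathcal{Q} \mid Ia = a \}$ is the subquantale of open elements. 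By~\cref{ConucUnfold} and~\cref{Representation lemma} we already know $\widehat{Ia} = \hat{I}\hat{a}$, so this data is a legitimate object of ${\bf Con RQuant}$.

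For a morphism $f : \langle \mathcal{Q}_1, I_1 \rangle \to \langle \mathcal{Q}_2, I_2 \rangle$ in ${\bf Con Quant}$ (a quantale homomorphism, unital when the quantales are unital, with $f(I_1 a) = I_2(f a)$), I would define $\mathfrak{F}(f) : \hat{\mathcal{Q}}_1 \to \hat{\mathcal{Q}}_2$ by $\mathfrak{F}(f)(\hat{a}) = \widehat{f(a)}$. First I would check this is well-defined: since $\hat{a} \subseteq \hat{b}$ iff $a \leq b$, the map $a \mapsto \hat{a}$ is an order isomorphism onto $\mathcal{R}$, so $\hat{a} = \hat{b}$ forces $a = b$ and there is no ambiguity. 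Then I would verify $\mathfrak{F}(f)$ is a morphism of relational quantales with conuclei: preservation of $\circ$ follows from $\widehat{a \cdot b} = \hat{a} \circ \hat{b}$ together with $f(a \cdot b) = f(a) \cdot f(b)$; preservation of arbitrary joins follows from $\widehat{\bigvee_j a_j} = \bigvee_j \hat{a_j}$ together with $f(\bigvee_j a_j) = \bigvee_j f(a_j)$; the unit is preserved because $\hat{\varepsilon} = \{ \langle b,c\rangle \mid b \leq c\}$ and $f(\varepsilon) = \varepsilon$; and compatibility with the conuclei, $\mathfrak{F}(f)(\hat{I_1}\hat{a}) = \hat{I_2}(\mathfrak{F}(f)(\hat{a}))$, follows by chasing $\mathfrak{F}(f)(\hat{I_1}\hat{a}) = \mathfrak{F}(f)(\widehat{I_1 a}) = \widehat{f(I_1 a)} = \widehat{I_2(f a)} = \hat{I_2}\widehat{f(a)} = \hat{I_2}(\mathfrak{F}(f)(\hat{a}))$, using $\widehat{Ia} = \hat{I}\hat{a}$ at both ends. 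Functoriality (preservation of identities and composition) is then immediate from $\widehat{\mathrm{id}(a)} = \hat{a}$ and $\widehat{(g \circ f)(a)} = \widehat{g(f(a))}$.

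The main obstacle I anticipate is not any single computation but making sure the target relational quantale is defined over the right carrier set and that $\hat{I}$ genuinely lands in ${\bf Con RQuant}$ as that category is set up: one must confirm that $\hat{I}$ is a quantic conucleus \emph{on the relational quantale} $\langle \mathcal{R}, \subseteq, \circ, \bigvee \rangle$, not merely a map on $\mathcal{P}(R)$, and in the unital case that $\hat{I}\hat{\varepsilon} = \hat{\varepsilon}$ — which holds because $\varepsilon \in \mathcal{Q}_I$ by the unitality axiom $I\varepsilon = \varepsilon$ built into the definition of quantic conucleus. A secondary subtlety is that~\cref{Representation lemma} is phrased for the conucleus $I_{\mathcal{S}}$ attached to a subquantale $\mathcal{S}$; to apply it to an arbitrary quantic conucleus $I$ one takes $\mathcal{S} = \mathcal{Q}_I$ and invokes~\cref{ConucUnfold} to identify $I$ with $I_{\mathcal{Q}_I}$, exactly as in the proof of the representation theorem above. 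Once these bookkeeping points are settled, the verification that $\mathfrak{F}$ is a functor is routine, and in fact the argument shows each component $a \mapsto \hat{a}$ is an isomorphism, so $\mathfrak{F}$ is full, faithful, and essentially surjective, giving the promised equivalence ${\bf Con Quant} \simeq {\bf Con RQuant}$.
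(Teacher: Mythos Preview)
Your proposal is correct and follows essentially the same approach as the paper: define $\mathfrak{F}$ on objects via the Brown--Gurr construction $a \mapsto \hat{a}$ together with the induced conucleus $\hat{I}$, define it on morphisms by $\hat{f}(\hat{a}) = \widehat{f(a)}$, and verify the conucleus compatibility $\mathfrak{F}(f)(\hat{I}\hat{a}) = \hat{I}(\mathfrak{F}(f)(\hat{a}))$ by the chain $\hat{f}(\widehat{Ia}) = \widehat{f(Ia)} = \widehat{I(fa)} = \hat{I}\widehat{fa}$. Your write-up is in fact more complete than the paper's, which only spells out the conucleus-compatibility check and defers the remaining homomorphism and functoriality verifications to the underlying equivalence of quantales and relational quantales from \cite{BrownAndGurr1}.
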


\begin{proof}
  Let us define the following map:

  \begin{itemize}
    \item $\mathfrak{F} : \mathcal{Q} = \langle Q, \cdot, \bigvee, I \rangle \mapsto \hat{\mathcal{Q}} = \langle \mathcal{R}, \bigvee, \circ, \hat{I} \rangle$, \\ where $\mathcal{R} = \{ \hat{a} \: | \: a \in \mathcal{Q} \}$,
    $\mathcal{Q} \in Ob({\bf Con U Quant})$, and $\hat{a}$ is defined according to the construction provided in the proof of representation theorem
    \item Let $f \in Mor({\bf Con U Quant})$, that is, $f : \langle Q_1, \cdot, \bigvee, I_1 \rangle \to \langle Q_2, \cdot, \bigvee, I_2 \rangle$, \\ then $\mathfrak{F} : f \mapsto \hat{f}$, where $\hat{f}(\hat{a}) = \widehat{f(a)}$.
  \end{itemize}

  It is enough to check that $\mathfrak{F}(f)(\hat{I}(\hat{a})) = \hat{I}(\mathfrak{F}(f)(\hat{a}))$:

  $\begin{array}{lll}
  &\mathfrak{F}(f)(\hat{I}(\hat{a})) = \hat{f}(\hat{I}(\hat{a})) = \hat{f}(\widehat{I a}) = \widehat{f (I a)} =
  \hat{I} (\widehat{f (a))} = \hat{I}(\mathfrak{F}(f)(\hat{a}))&
  \end{array}$

\vspace{\baselineskip}

  The last equation follows from the fact that the category of quantales is equivalent to its subcategory of relational quantales \cite{BrownAndGurr1}.
\end{proof}

Finally, we formulate the theorem which claims that $SMALC_{\Sigma}$ is complete w.r.t relational quantales with quantic conuclei.

\begin{theorem}
  $SMALC_{\Sigma} \vdash \Gamma \rightarrow A$ iff $\Gamma \models_{\mathcal{Q}_R} A$ for some relational quantale $\mathcal{Q}_R$ with a family of quantic conuclei.
\end{theorem}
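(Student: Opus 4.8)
The plan is to chain together the three completeness-flavoured results already established, together with the representation theorem proved immediately above. Concretely, the argument is a "transport of structure" along the isomorphism of \cref{Representation lemma}: algebraic completeness gives us a quantale model, and the representation theorem converts that model into a relational one without changing the truth value of any sequent.

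First I would handle the easy direction. If $SMALC_{\Sigma} \vdash \Gamma \rightarrow A$, then by the soundness theorem $[\![\Gamma]\!] \leq [\![A]\!]$ in \emph{every} quantale equipped with a subexponential interpretation; in particular this holds in every relational quantale carrying a family of quantic conuclei indexed appropriately by $\Sigma$, since such a structure is a special case of a unital quantale with quantic conuclei. Hence $\Gamma \models_{\mathcal{Q}_R} A$ for all such $\mathcal{Q}_R$, a fortiori for some. Strictly speaking one should note that the soundness theorem was stated for an arbitrary unital quantale $\mathcal{Q}$ and an arbitrary subexponential interpretation $\sigma : \Sigma \to CN(\mathcal{Q})$, and a relational quantale is just such a $\mathcal{Q}$, so nothing new needs to be proved here.

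For the converse, suppose $\Gamma \not\vdash A$. By the algebraic completeness theorem (the syntactic-quantale construction, \cref{WellDef} and the surrounding lemmas), there is the syntactic quantale $Syn$ together with the subexponential interpretation $\Box : \Sigma \to CN(Syn)$ and the canonical valuation with $[\![B]\!] = \bigvee\{B\}$, and in this model $[\![\Gamma]\!] \not\leq [\![A]\!]$, i.e. $\bigvee\{\Gamma\} \not\subseteq \bigvee\{A\}$. Now I would apply \cref{Representation lemma} simultaneously to each conucleus $\Box(s)$, $s \in \Sigma$: the theorem just after that lemma already packages the single-conucleus case, and the construction $a \mapsto \hat a$ is one fixed order-embedding that does not depend on $s$, so the same $\widehat{(\cdot)}$ transports \emph{every} $\Box(s)$ to a conucleus $\widehat{\Box(s)}$ on the relational quantale $\widehat{Syn}$ on the underlying set of $Syn$, with $\widehat{\Box(s)}\,\hat a = \widehat{\Box(s)\,a}$. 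Because $\widehat{(\cdot)}$ is a quantale isomorphism onto its image preserving products, arbitrary joins, the unit, and (by the displayed computation in \cref{Representation lemma}) each conucleus, it commutes with the interpretation clauses, so the image valuation $[\![B]\!]' := \widehat{[\![B]\!]}$ satisfies $[\![B]\!]' = \widehat{\bigvee\{B\}}$ and $[\![\Gamma]\!]' \not\subseteq [\![A]\!]'$ by the order-reflecting property $\hat a \subseteq \hat b \Leftrightarrow a \leq b$. Hence $\Gamma \not\models_{\mathcal{Q}_R} A$ for this particular relational quantale with its family of conuclei, and contrapositively $\Gamma \models_{\mathcal{Q}_R} A$ for \emph{some} $\mathcal{Q}_R$ forces derivability.

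The main obstacle, and the only point requiring genuine care, is checking that the family $\{\widehat{\Box(s)}\}_{s\in\Sigma}$ on $\widehat{Syn}$ still counts as the image of a \emph{subexponential interpretation} — that is, that contravariance ($s_1 \preceq s_2 \Rightarrow \widehat{\Box(s_2)} \leq \widehat{\Box(s_1)}$) and the unital/central/strongly-square-increasing properties attached to $\mathcal{W},\mathcal{C},\mathcal{E}$ survive transport. This follows because $\widehat{(\cdot)}$ is an order-isomorphism onto its image and a quantale homomorphism preserving the unit and the centre (a homomorphic image of the centre lies in the centre, and strong square-increase is an inequality between products, hence preserved), so each defining (in)equality transfers verbatim; one invokes \cref{prop1}, \cref{Upward}, and the soundness-side lemmas \cref{ImageContr} and \cref{WellDef} to see that the transported family satisfies exactly the clauses of the definition of subexponential interpretation. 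Once that bookkeeping is done, the theorem follows by combining soundness, algebraic completeness, and the representation theorem as above.
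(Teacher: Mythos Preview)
Your approach is exactly the paper's: the proof there is the single line ``Follows from soundness, completeness, and representation theorems,'' and you have spelled out precisely that chain, including the observation that the hat-map of \cref{Representation lemma} is one fixed isomorphism and therefore transports the whole $\Sigma$-indexed family of conuclei simultaneously.

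One point to tighten, however. The theorem should be parsed as
\[
\exists\,\mathcal{Q}_R\ \forall\,\Gamma,A\ \bigl(\ \vdash \Gamma\rightarrow A\ \Longleftrightarrow\ \Gamma\models_{\mathcal{Q}_R} A\ \bigr),
\]
i.e.\ a \emph{single} relational quantale characterises derivability. Your final sentence, ``$\Gamma \models_{\mathcal{Q}_R} A$ for \emph{some} $\mathcal{Q}_R$ forces derivability,'' places the existential inside the biconditional, and under that reading the backward direction is false (a one-element relational quantale validates every sequent). The contrapositive you actually established is ``not derivable $\Rightarrow$ not valid in $\widehat{Syn}$,'' which is the right statement for the existential-outside reading but not for the one you wrote. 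Your construction already supplies the correct witness: $\widehat{Syn}$ with the transported family $\{\widehat{\Box(s)}\}_{s\in\Sigma}$ is the fixed $\mathcal{Q}_R$; soundness gives $(\Rightarrow)$ in that model, and your refutation argument gives $(\Leftarrow)$. So the mathematics is right; only the quantifier placement in the prose needs adjusting.
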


\begin{proof}
  Follows from soundness, completeness, and representation theorems.
\end{proof}

Thus, we proved that Lambek calculus is sound and complete with respect to quantales with quantic conuclei defined via subquantales. Also, we showed that any quantale with quantic conuclei is isomorphic to relational quantale with coclosure operators on its underlying set. The representation theorem allows us to claim that $SMALC_{\Sigma}$ is also relationally complete.

\section{Acknowledgements}

Author would like to thank Lev Beklemishev, Giuseppe Greco, Stepan Kuznetsov, Jean-Francois Mascari, Fedor Pakhomov, Danyar Shamkanov, Ilya Shapirovsky, and Andre Scedrov for advice, critique, and comments.

The research is supported by the Presidential Council, research grant MK-430.2019.1.

\addcontentsline{toc}{section}{References}

\end{document}